\newtheorem{theorem}{Theorem}
\newtheorem{corollary}{Corollary}
\newtheorem{appr}{Approximation}
\newcommand{\dd}{\mathrm{d}}
\newcommand{\Var}{\operatorname{Var}}
\newcommand{\E}{\mathbb{E}}
\newcommand{\N}{\mathcal{N}}
\begin{document}

\title{A posteriori stochastic correction of reduced models in delayed acceptance MCMC, with application to multiphase subsurface inverse problems}

\author[1]{Tiangang Cui\thanks{tiangang.cui@monash.edu}}
\author[2]{Colin Fox\thanks{colin.fox@otago.ac.nz}} 
\author[3]{Michael~J.~O'Sullivan\thanks{m.osullivan@auckland.ac.nz}}

\affil[1]{School of Mathematical Sciences, Monash University, Melbourne, Australia}
\affil[2]{Department of Physics, University of Otago, Dunedin, New Zealand}
\affil[3]{Department of Engineering Science, The University of Auckland, Auckland, New Zealand} 

\maketitle

\begin{abstract}
Sample-based Bayesian inference provides a route to uncertainty quantification in the geosciences, and inverse problems in general, though is very computationally demanding in the na\"ive form that requires simulating an accurate computer model at each iteration. 
We present a new approach that constructs a stochastic correction to the error induced by a reduced model, with the correction improving as the algorithm proceeds. This enables sampling from the correct target distribution at reduced computational cost per iteration, as in existing delayed-acceptance schemes, while avoiding appreciable loss of statistical efficiency that necessarily occurs when using a reduced model. 
Use  of the stochastic correction significantly reduces the computational cost of estimating quantities of interest within desired uncertainty bounds. In contrast, existing schemes that use a reduced model directly as a surrogate do not actually improve computational efficiency in our target applications.
We build on recent simplified conditions for adaptive Markov chain Monte Carlo algorithms to give practical approximation schemes and algorithms with guaranteed convergence.
The efficacy of this new approach is demonstrated in two computational examples, including calibration of a large-scale numerical model of a real geothermal reservoir, that show good computational and statistical efficiencies on both synthetic and measured data sets.
\end{abstract}

\section{Introduction\label{sec:intro}}

Characterizing subsurface flow in aquifers, geothermal and petroleum reservoirs often requires inferring unobserved subsurface properties from indirect observations made on the underlying system. This is a typical inverse problem. 
There are several fundamental difficulties associated with this process: data are corrupted by noise in the measurement process and often sparsely measured, the forward model only has a limited range of accuracy in representing the underlying system, and the parameters of interest are usually spatially distributed and highly heterogeneous, e.g., permeabilities and conductivities.
These features make the inverse problem ill-posed~\cite{hadamard, stats_inverse}.
This implies that there exists a range of feasible parameters that are consistent with the data, and hence a range of possible model predictions. 
Ill-posedness also causes the correlations between fitted parameters to be extremely high, and the model output to occupy some low dimensional manifold in data space. 
These properties make the traditional, deterministic solution to the inverse problem very sensitive to measurement error and model error.

In this paper, we develop the `solution' to the inverse problem in the Bayesian inference framework, and present algorithmic advances for computing inferential solutions. Solutions to inverse problems under the Bayesian framework is well established, 
with comprehensive examples including characterization of subsurface properties of aquifers and petroleum reservoirs such as \cite{hydromcmc_oliver1997,  model_higdon2002, model_higdon2003,  hydromcmc_mondal_2010}, remote sensing such as \cite{cornford_inversion2004,haario_inversion2004} and impedance imaging such as \cite{mcmc_resistivity,model_watzenig2009}.
By incorporating prior information and modelling uncertainties in the data observation process, all information regarding the solution to the inverse problem is coded into the posterior distribution over unknown parameters conditioned on the measured data. 

Sample-based inference proceeds by computing Monte Carlo estimates of posterior statistics, to give `solutions' and quantified uncertainties, using samples drawn from the posterior distribution via Markov chain Monte Carlo (MCMC) sampling. However, the applicability of this approach is fundamentally limited by the computational cost of the forward map and errors associated with the data modelling process.
Difficulties arise from the high dimensionality of the model parameters, and computational demands of the forward map, which is usually dictated by numerical solutions of nonlinear partial differential equations (PDEs).
In practice, it is computationally prohibitive to run standard MCMC for many iterations in realistic geophysical inverse problems.

Computational efficiency can be significantly improved by exploiting reduced models and recent advances in model errors.
We present a new adaptive delayed acceptance (ADA) algorithm  that
%
is based on the delayed acceptance (DA) MCMC algorithm \cite{ChristenFox2005} that takes advantage of a reduced model and the associated approximate posterior to speed up the computation. 
Coupled with recent advances in stochastic models for model errors \cite{model_ohagan, model_kaipio1} and state-of-the-art techniques in adaptive MCMC algorithms \cite{adaptive_roberts2007}, ADA dynamically improves the accuracy of the approximate posterior distribution.

The adaptation used here is substantially different from the classical adaptive mesh refinement of finite difference or finite element methods. 
Rather, we start with a given reduced model for the forward map, in addition to an expensive, exact forward map that defines the desired target distribution and solutions. We implement a new usage of adaptive MCMC methods by learning a stochastic correction to the reduced model, thereby bringing the corrected approximate posterior closer to the exact posterior, as well as the usual adaptation of the proposal distribution. 

The use of reduced, deterministic models is well developed in large-scale inverse problems, particularly in optimization for evaluating regularized inverses. Our interest is in using reduced models to improve efficiency of MCMC for fitting Bayesian hierarchical formulations of inverse problems. A reduced model induces an approximated likelihood function, and hence approximate posterior distribution. The approximate posterior distribution may be used to modify a proposal distribution by first performing an accept-reject step using the approximate posterior distribution. Subsequently using the modified proposal in a standard Metropolis--Hastings accept-reject step with the exact posterior distribution  guarantees convergence to the desired, exact target distribution. When the reduced model has reduced computational cost, compared to the exact computational model, computational expense is potentially reduced since only those proposals that are accepted by the approximation are passed on for the exact, expensive model computation. 

The simplest case of using a fixed reduced model, that induces a fixed `surrogate' target distribution, was introduced by Liu~\cite{mcmc_liu} as the `surrogate transition method', and later reintroduced as preconditioned or two-stage MCMC~\cite{Efendiev}. This method uses Approximation~\ref{appr:coarse} in Sec.~\ref{sec:approx}. Christen and Fox~\cite{ChristenFox2005} introduced the more general `delayed acceptance' algorithm that allows the approximation to depend on the state of the MCMC, and demonstrated a speed-up in an inverse problem using a local linearization as the state-dependent reduced model. The delayed acceptance algorithm is presented in Section~\ref{sec:da}. 

The use of an approximate model in the delayed acceptance MCMC necessarily reduces statistical efficiency~\cite{ChristenFox2005}, compared to the MCMC that uses the exact target only. That is, more iterations are required to calculate quantities of interest to within a desired accuracy, asymptotic in the length of the chain. Without care, this increase in required number of iterations can offset the decrease in computation per iteration to the point where computational cost is not actually reduced, despite the extra programming effort. 
In the computed example in Section~\ref{sec:1d} we find that the use of a state-dependent reduced model is critical in improving computational efficiency in our target application; the same conclusion was reached in the setting of a large-data inverse problem in economics~\cite{Quiroz2018}. 

The main contribution in this work is extending the delayed acceptance MCMC to allow correcting the deterministic reduced model 
by learning a stochastic correction to the \textit{a posteriori} error induced by the reduced model, i.e.,  the difference between exact and approximated likelihood functions over states distributed as the posterior distribution. Stochastic corrections to the likelihood defined by a reduced model have been used previously, though the construction has been made \textit{a priori}, that is, before exploration of the posterior distribution is performed. Here, we show that the construction may be performed \textit{a posteriori}, as the posterior distribution is being explored, resulting in significant computational gains. In Section~\ref{sec:aem} we follow the development of the approximation error model (AEM) introduced by Kaipio \& Somersalo for Bayesian-inspired regularization~\cite{stats_inverse}, though the same construction may be found in the field of computer model calibration~\cite{model_ohagan}. We also draw on the techniques of adaptive MCMC~\cite{adaptive_haario2001,adaptive_roberts2007}, that allows the transition kernel of the Markov chain to be modified as iterations progress, in a way that guarantees ergodicity for the desired posterior distribution.
The AEM is presented in Section~\ref{sec:aem}, with improvements developed subsequently in Section~\ref{sec:approx}. Standard adaptive MCMC is presented in Section~\ref{sec:adapt}, with our novel adaptation of the stochastic correction presented in Section~\ref{sec:adamh}.

In Section~\ref{sec:approx} we present details of five approximation schemes: a fixed reduced model giving the surrogate transition method (Approx.~\ref{appr:coarse}); a fixed reduced model with AEM built over the prior distribution as in~\cite{stats_inverse}  (Approx.~\ref{appr:aempr}); a fixed reduced model with AEM built adaptively over the posterior distribution (Approx.~\ref{appr:aempost}); a zeroth-order correction of the fixed reduced model that gives a state-dependent approximation (Approx.~\ref{appr:local}); and the state-dependent approximation with stochastic correction built adaptively over the posterior distribution (Approx.~\ref{appr:aemsd}). Approximations ~\ref{appr:aempost}, \ref{appr:local}, and \ref{appr:aemsd} are new methods. Each of these five approximations have essentially the same computational cost per iteration, as the cost of the corrections are negligible in our target applications. Yet the statistical efficiency of the resulting MCMC is very different. Compared to the unmodified, exact MCMC, the surrogate transition method (Approx.~\ref{appr:coarse}) requires many times more iterations to estimate parameters to within a desired accuracy, while Approximation~\ref{appr:aemsd} requires virtually the same number of iterations. Statistical efficiency and computational efficiency are formally defined in Sections~\ref{sec:cme} and \ref{sec:da}, respectively. This quantification of the trade-off between statistical efficiency and computing time of MCMC is applicable to all sampled-based solutions of inverse problems. A comparative study of computational cost and efficiency is presented in a simulated analysis of a well discharge test in Section~\ref{sec:1d} that demonstrates the progression in computational efficiency from  Approximation~\ref{appr:coarse} to Approximation~\ref{appr:aemsd}.

The new adaptive delayed acceptance (ADA) algorithm, that builds the \textit{a posteriori} stochastic correction and can implement all approximations, is presented in Section~\ref{sec:adamh}. A proof that ADA is ergodic for the target distribution is given in Section~\ref{sec:ergodic}, to guarantee that Monte Carlo estimates evaluated over the chain converge to the true, posterior value. Proofs of ergodicity are crucial for correctness of an MCMC since convergence is in distribution so cannot be diagnosed from a single state, unlike optimization algorithms where a local check of zero gradient is possible. Ergodicity of adaptive MCMC algorithms, such as ADA, is even more delicate since it is easy to specify an adaptive MCMC that is incorrect, that is, Monte Carlo estimates fail to converge to the correct value. To emphasize this point~\cite{adaptive_roberts2007} presents examples of seemingly straightforward adaptive MCMC algorithms that are actually incorrect. Our proof of ergodicity utilizes the simplified requirements of simultaneous uniform ergodicity and diminishing adaptation introduced in~\cite{adaptive_roberts2007}; see Theorem~\ref{theo1}. The general framework allowed by the proof of ergodicity in Theorems~\ref{theo2} and \ref{theo3} is not limited by the current form of reduced models and posterior approximations used in this paper. The general regularity conditions of our ergodicity theorem open the door to building other goal-oriented reduced models for high-dimensional inverse problems.

A further innovation presented in Section~\ref{sec:gcam} is the grouped-component adaptive Metropolis (GCAM) proposal distribution 
that is suitable for high-dimensional inverse problems with black-box forward models. 


We validate ADA on two models of geothermal reservoirs. The first is a 1D homogeneous model with 7 unknown parameters, using synthetic transient data. This moderate-sized inverse problem allows extensive calculation of statistics providing a comparative study of efficiencies of algorithms and approximations. In the second example we predict the hot plume of a 3D multi-phase geothermal reservoir model by estimating the heterogeneous and anisotropic permeability distribution and the heterogeneous boundary conditions \cite{wrr_cfo_2011}.
This model has $10,049$ unknown parameters, and each model simulation requires about 30 to 50 minutes of CPU time. 
In both studies, ADA shows superior computational performance than the counterparts that do not use adaptation.

This paper is structured as follows: 
Section \ref{sec:bfpe} briefly sets out the Bayesian formulation of inverse problems, reviews existing sample-based inference including delayed-acceptance and adaptive MCMC algorithms, discusses computational and statistical efficiency, and presents the GCAM proposal used in this paper. 
Section \ref{sec:approx} presents deterministic and stochastic approximations to the forward map and posterior distribution that are used to reduce computational cost per iteration. Section~\ref{sec:postAEM} \textit{et seq.} constitute the new contributions in this paper.
Section \ref{sec:adamh} presents the ADA algorithm that utilizes the approximations developed in Section~\ref{sec:approx}, and gives a proof of convergence. 
Section \ref{sec:geo} presents two case studies of ADA on calibrating geothermal reservoir models. 
Section \ref{sec:dis} offers some conclusions and discussion.

\section{Bayesian formulation and posterior exploration}
\label{sec:bfpe}
In this section, we review existing sample-based Bayesian methods for inverse problems that are relevant to the ADA algorithm developed in Section~\ref{sec:adamh}. 
We also discuss the computational efficiency of MCMC, that is the total computing cost required to evaluate a quantity of interest to within a given error tolerance,  and present a new adaptive MCMC proposal used in our numerical examples.

\subsection{Bayesian formulation}
\label{sec:inv}

Suppose a physical system is simulated by a forward model $F(\cdot)$ and the unknowns of the physical system are parametrized by model parameters $\bf x$. 
The measurable features of the system are the image of the forward map at the true but unknown parameters $\bf x$, ${\bf d} = F({\bf x})$, called the true, noise-free data. Practical measurements $\tilde{\bf d}$ are a noisy version of  ${\bf d}$ subject to measurement errors and other sources of imprecision.
In an inverse problem, we wish to recover the unknown $\bf x$ from measurements $\tilde{\bf d}$, and to make predictions about properties of the physical system, such as future, unobserved data. 

Uncertainty in measured data $\tilde{\bf d}$ and in the model $F(\cdot)$ leads to uncertain estimates of parameters $\bf x$, and in subsequent predictions. What then is the range of permissible values, that is, the distribution over resulting estimates? The Bayesian formulation assigns probability distributions to each of the sources of error, and tracks the resulting distributions over quantities of interest. In our experience, this is the essential feature of the Bayesian method. 

The Bayesian framework quantifies the distribution over parameters, consistent with the measured data, by the posterior distribution with the unnormalized probability density function
\begin{equation} 
\pi_{\rm post}({\bf x}, \gamma, \delta \, |\, \tilde{\bf d}) \propto L(\tilde{\bf d} \, | \, {\bf x}, \gamma) \, \pi_{\rm prior}({\bf x} \,|\, \delta) \, \pi_{\rm prior}(\gamma) \, \pi_{\rm prior}(\delta),
\label{eq:bayes}
\end{equation} 
where $L(\tilde{\bf d} \, | \, {\bf x}, \gamma)$ is the likelihood function, giving the probability density of measuring data $\tilde{\bf d}$ in identical measurement setups defined by $ {\bf x}$ and $\pi_{\rm prior}({\bf x} \,|\, \delta)$ is the prior probability density function for model parameters ${\bf x}$.
In Eqn. \ref{eq:bayes}, we also introduce hyperparameters $\gamma$ and $\delta$ to represent modelling assumptions in the likelihood function and the stochastic model of ${\bf x}$, respectively. 
In a typical hierarchical Bayesian setting, these hyperparameters follow independent prior distributions $\pi_{\rm prior}(\gamma)$ and $\pi_{\rm prior}(\delta)$.

%
%

The likelihood function is determined by modelling the stochastic process that results in measured data for a given state of the system. Measured data is commonly assumed to be related to noise-free data by the additive error model
\begin{equation} 
\tilde{\bf d} = F({\bf x}) + \bf e,
\label{eq:cali}
\end{equation} 
where the random vector $\bf e$ captures the measurement noise and other uncertainties such as model error. 
%
%
When $\bf e$ follows a zero mean multivariate Gaussian distribution the resulting likelihood function has the form 
\begin{equation} 
L(\tilde{\bf d} \, | \, {\bf x}, \Sigma_{\bf e}) \propto \exp \left[ -\frac{1}{2}  \big( F({\bf x})-\tilde{\bf d} \big) ^\top \boldsymbol \Sigma_{\bf e}^{-1} \big( F({\bf x})-\tilde{\bf d}\big) \right],
\label{eq:llkd1}
\end{equation} 
where the hyperparameter $\gamma = \boldsymbol\Sigma_{\bf e}$ is the covariance matrix of the noise vector $\bf e$, 
with uncertainty in the covariance being modelled by the (hyper)prior distribution $\pi_{\rm prior}(\gamma)$. 
Note that evaluating the likelihood function for a particular ${\bf x}$ requires simulating the forward model $F(\bf x)$, which is a computationally taxing numerical simulation of a mathematical model of the physical system. 

The contribution of model error to the noise vector $\bf e$ is usually non-negligible, in practice. This may be caused by discretization error in the computer implementation of the mathematical forward model and/or wrong assumptions in the mathematical model. 
We follow \cite{model_higdon2003} who observe that it may not possible to separate the measurement noise and model error in the case that only single set of data is available. 
Thus, it is necessary to incorporate the modeller's judgments about the appropriate size and nature of the noise term $\bf e$. 
We also adopt the assumption of \cite{model_higdon2003} that the noise $\bf e$ follows a zero mean Gaussian distribution, giving the likelihood function in Eqn.~\ref{eq:llkd1}.

The primary unknowns ${\bf x}$ of a physical system can often be represented by spatially distributed parameters, for example, the coefficients in a partial differential equation modelling energy and/or mass propagation. Correspondingly,  representations and prior models are usefully drawn from spatial statistics, often in the form of hierarchical models \cite{hierarchical_bayes}; see \cite{rue_2003} for a review.   
%
Since the particulars of prior modelling are problem specific, we delay presenting detailed functional forms to the examples in Section~\ref{sec:geo}.

Because the interaction between the primary parameter $\bf x$ and hyperparameters $\Sigma_{\bf e}$ and $\delta$ introduces significant computational difficulties \cite{gmrf_rue}, here we set the value of hyperparameters based on expert opinion and field measurements. This yields the posterior density function
\begin{equation} 
\pi_{\rm post}({\bf x} \, | \, \tilde{\bf d}) \propto \exp\left[-\frac{1}{2}  \big( F({\bf x})-{\bf \tilde{d}} \big)^\top \boldsymbol\Sigma_{\bf e}^{-1} \big( F({\bf x})-{\bf \tilde{d}} \big)\right] \pi_{\rm prior}({\bf x}),
\label{eq:post}
\end{equation}
that is used throughout the remainder of this paper.

We summarize solutions to the inverse problem and the associated uncertainties as the expectation of some quantities of interest over the posterior. For instance, the mean of parameters, the variance of model states, and the credible intervals of model predictions.
Given a quantity of interest $g({\bf x})$, we calculate the Monte Carlo estimate, denoted by $\overline{g}_n$, of the posterior expectation as 
\begin{equation} 
\E[g] = \int g({\bf x}) \pi_{\rm post}({\bf x | {\bf \tilde{d}}}) \,\dd{\bf x}
                     \approx \overline{g}_n = \frac{1}{n}\sum_{i=1}^{n} g({\bf x}_{i}),
\label{eq:expect}
\end{equation} 
using $n$ samples drawn from the posterior distribution, i.e., ${\bf x}_{i}\sim \pi_{\rm post}({\bf x} \, | \, \tilde{\bf d})$. 
Thus, the task of estimating parameters or predictive values is reduced to the task of drawing samples from the posterior distribution; this defines sample-based Bayesian inference. We present algorithms for sampling from $\pi_{\rm post}$ in Sections~\ref{sec:mh} and \ref{sec:da}, and novel efficient methods in Section~\ref{sec:adamh}.

\subsection{Metropolis-Hastings Dynamics}
\label{sec:mh}

The basis of the sampling methods we develop is the  Metropolis Hasting (MH) algorithm  \cite{metropolis, mcmc_hastings}. One initializes the Markov chain at some starting state ${\bf x}_0$ 
and then iterate as in Alg.~\ref{alg:mh}. 

\begin{algorithm}[ht]
\caption{Metropolis-Hastings}
At iteration $n$, given ${\bf x}_n = {\bf x}$, then ${\bf x}_{n+1}$ is determined in the following way:
\begin{enumerate}
\item Propose new state ${\bf y}$ from some distribution $q\left({\bf x},\cdot\right)$.
\item With probability 
\begin{equation} 
  \alpha\left({\bf x, y}\right) = \min\left\{1, \frac{\pi_{\rm post}\left({\bf y} \, | \, \tilde{\bf d}\right) q\left({\bf y} ,{\bf x}\right)}
                     {\pi_{\rm post}\left({\bf x} \, | \, \tilde{\bf d}\right)q\left({\bf x} ,{\bf y}\right)}   
      \right\},
\end{equation} 
set ${\bf x}_{n+1} = {\bf y}$, otherwise ${\bf x}_{n+1} = {\bf x}$.
\end{enumerate}
\label{alg:mh}
\end{algorithm} 
%
%
The only choice one has in Alg.~\ref{alg:mh} is the choice of proposal distribution $q({\bf x, y})$. Together with the acceptance probability $\alpha\left({\bf x, y}\right)$, Alg.~\ref{alg:mh} defines a transition kernel
\begin{equation} 
K\left({\bf x, y}\right) = q\left({\bf x, y}\right)\,\alpha\left({\bf x, y}\right) + \left[ 1 - \int q\left({\bf x, z}\right)\,\alpha\left({\bf x, z}\right) d {\bf z} \right] \delta_{\bf x}({\bf y}),
\label{eq:transkernel}
\end{equation} 
that produces a Markov chain of correlated samples. In Eqn. \ref{eq:transkernel}, $\delta_{\bf x}({\bf y})$ is the Dirac delta function and the second term represents the probability of remaining at current state $\bf x$.
The choice of the proposal is largely arbitrary as long as it satisfies reversibility, recurrence, and irreducibility \cite{mcmc_robert}, and thus the resulting MH algorithm is {\it ergodic} and has $\pi_{\rm post}( \, \cdot\,| \, \tilde{\bf d})$ as the {\it unique stationary distribution}. 
An ergodic MH algorithm produces samples that converge in distribution to the posterior as the number of iterations $n\rightarrow\infty$.
%
%
After a burn-in period, in which the Markov chain effectively loses dependency on the starting state, samples from the chain may be substituted directly into the Monte Carlo estimate in Eqn.~\ref{eq:expect} to produce the estimate $\overline{g}_n$ of quantity $g$.

The choice of the proposal has a significant influence on the rate of  convergence of $\overline{g}_n$ to $g$, that depends on the degree of correlation \cite{Sokal,mcmc_geyer1992}. Markov chains that are fast to converge have lower correlation between adjacent samples.
Traditionally, the proposal distribution is chosen from some simple family of distributions, e.g., multivariate Gaussian, then manually tuned in a ``trial and error'' manner to optimize the rate of convergence. We present automatic, adaptive methods for tuning the proposal in Sections~\ref{sec:adapt} and \ref{sec:adamh}.

%
%

\subsection{Convergence and Statistical Efficiency}
\label{sec:cme}

Sample-based inference returns $\overline{g}_n$ in Eqn.~\ref{eq:expect}, which is an estimate of the quantity of interest with some Monte Carlo error due to $n$ being finite. 
%
Under mild conditions, convergence of the Monte Carlo estimate $\overline{g}_n$, in Eqn.~\ref{eq:expect}, is guaranteed by a central limit theorem (CLT) \cite{clt_kipnis1986} that gives
\(\overline{g}_n-\E\left[ g \right]\sim\N\left(0,\Var(\overline{g}_n)\right) \) as $n\rightarrow\infty$. Here, $\N\left(\mu,\sigma^2\right)$ denotes the normal (or Gaussian) distribution with mean $\mu$ and standard deviation $\sigma$. Hence, asymptotic in sample size $n$, $\overline{g}_n\rightarrow\E\left[ g \right]$ almost surely, with accuracy $\sqrt{\Var(\overline{g}_n)}$ when $n$ samples are used. 
When the samples ${\bf x}_{i}$ are independent, it follows that 
\[ 
\Var(\overline{g}_n) = \frac{\Var(g)}{n}. 
\] 
Since $\Var(g)$ depends only on the quantity being estimated, this sets the fewest number of iterations that a practical MCMC algorithm will require to achieve a given accuracy.

When the ${\bf x}_{i}$ are samples from a correlated Markov chain, as generated by any of the sampling algorithms discussed in this paper, instead (for  large $n$)
\[ 
\Var(\overline{g}_n) = \frac{\Var(g)}{n}\left(1+2\sum_{i=1}^\infty \rho_{gg}(i) \right), \] 
where $\rho_{gg}(i)$ is the autocorrelation coefficient for the chain in $g$ at lag $i$. Thus, the rate of variance reduction, compared to independent samples, is reduced by the factor
\[ \tau = \left(1+2\sum_{i=1}^\infty \rho_{gg}(i) \right), \] 
which is the integrated autocorrelation time (IACT) for the statistic $g$~\cite{Sokal}. 
We can think of $\tau\geq1$ being the length of the correlated chain that produces the same variance reduction as one independent sample. We call the quantity $1/\tau$ the {\it statistical efficiency}, while $n / \tau$ gives the number of effective (independent) samples for a Markov chain of length $n$.

\subsection{Delayed Acceptance and Computational Efficiency}
\label{sec:da}
We define the {\it computational efficiency} of MH to be the effective sample size per unit CPU time. Hence, as shown in the previous section, this is proportional to the rate of variance reduction in estimates per CPU time.

For inverse problems, applying standard MH can be computationally costly as the cost  is dominated by the evaluation of the posterior density, which involves simulating of the forward map $F({\bf x}')$ at proposed parameters.  
Instead, we consider using an approximate posterior, obtained by using some reduced model of the forward model, to improve the computational efficiency. 
We embed the approximation in the DA algorithm shown in Alg.~\ref{alg:da} to obtain asymptotically unbiased MCMC estimates.

\begin{algorithm}[h]
\caption{Delayed Acceptance (DA)}\mbox{} 
\label{alg:da}
At iteration $n$, given ${\bf x}_n = {\bf x}$ and a (possibly state-dependent) approximation to the target distribution $\pi_{\bf x}^*(\,\cdot\,|\,\tilde{\bf d})$, then ${\bf x}_{n+1}$ is determined in the following way:
\begin{enumerate}
\item \label{step:da1} \label{step:da2} Propose new state ${\bf y}$ from some distribution $q\left({\bf x}, \,\cdot\,\right)$. With probability 
\begin{equation*}
  \alpha\left({\bf x},{\bf y}\right) = \min\left\{1, \frac{\pi_{\bf x}^*\left({\bf y}\,|\,\tilde{\bf d}\right) q\left({\bf y} ,{\bf x}\right)}
                     {\pi_{\bf x}^*\left({\bf x}\,|\,\tilde{\bf d}\right)q\left({\bf x} ,{\bf y}\right)}   ,
      \right\}
\end{equation*}
promote ${\bf y}$ to be used in Step 2, otherwise set ${\bf y}={\bf x}$ (or, equivalently, set ${\bf x}_{n+1} = {\bf x}$ and exit).
\item \label{step:da3} The effective proposal distribution at the second step is
\begin{equation*}
  q^*\left({\bf x} ,{\bf y}\right) = 
    q\left({\bf x} ,{\bf y}\right)\,\alpha\left({\bf x},{\bf y}\right) 
    + \left[ 1-\int q\left({\bf x} ,{\bf z}\right)\,\alpha\left({\bf x},{\bf z}\right)\,\dd{\bf z}\right]\delta_{\bf x}({\bf y}).
\end{equation*}
With probability 
\begin{equation*}
  \beta\left({\bf x},{\bf y}\right) =\min\left\{1, \frac{\pi_{\rm post}\left({\bf y} \, | \, \tilde{\bf d}\right) q^*\left({\bf y} ,{\bf x}\right)}
                     {\pi_{\rm post}\left({\bf x} \, | \, \tilde{\bf d}\right)q^*\left({\bf x} ,{\bf y}\right)}   
      \right\}
\end{equation*}
set ${\bf x}_{n+1} = {\bf y}$, otherwise ${\bf x}_{n+1} = {\bf x}$. 
\end{enumerate}
\end{algorithm} 
The DA algorithm allows the approximation to depend on the state of the MCMC. Hence DA generalizes the surrogate transition method of~\cite{mcmc_liu} that uses a fixed approximate target distribution, or `surrogate'. 

DA uses two accept-reject steps. The first step uses an approximation to the target distribution, while the second accept-reject step ensures that the Markov chain correctly targets the desired distribution (see~\cite{ChristenFox2005} for details).
The resulting  transition kernel is
\begin{equation} 
K^*({\bf x, y}) = q( {\bf x, y} )\,\alpha( {\bf x, y} )\,\beta( {\bf x, y} ) + \left[ 1 - \int q( {\bf x, z} )\,\alpha( {\bf x, z} )\,\beta( {\bf x, z} ) d {\bf z} \right] \delta_{\bf x}({\bf y}),
\end{equation} 
that composes the proposal density, the first step acceptance probability, and the second step acceptance probability.
Given a proposal $q\left({\bf x}, \,\cdot\,\right)$ for which a single level MH satisfies reversibility, recurrence, and irreducibility, DA is ergodic under mild conditions and has $\pi_{\rm post}( \, \cdot\,| \, \tilde{\bf d})$ as the unique stationary distribution. \cite{ChristenFox2005}

In Alg.~\ref{alg:da}, there is never need to evaluate the integral in the definition of $q^*$. The computational cost per iteration is reduced because only those proposals that are accepted using the approximation  $\pi_{\bf x}^*(\,\cdot\,|\,\tilde{\bf d})$ go on to evaluation of the posterior distribution $\pi_{\rm post}(\,\cdot\, | \,\bf \tilde{d})$, that requires evaluating the full, expensive forward map.
However, DA necessarily has lower statistical efficiency than the unmodified counterpart in Alg.~\ref{alg:mh}, because of the additional acceptance/rejection in Step~\ref{step:da3}~\cite{ChristenFox2005}. That is, $\tau_{\mathrm{DA}}\geq\tau$ for any quantity $g$.
Fortunately, DA may still be more computationally efficient that the standard MH.

Here we analyze the potential speed-up factor in computational efficiency of DA compared to the standard MH. 
Let the CPU time to evaluate the approximate posterior density and the exact posterior density be $t^*$ and $t$, respectively. Suppose that the average acceptance probability in Step~\ref{step:da2} of DA is $\bar{\alpha}$, set by the choice of proposal.
Since evaluation of the posterior density dominates the CPU time of MH, simulating $n$ iterations of the standard MH asymptotically costs $n\, t$ CPU time.
Using the same CPU time, DA can be simulated for $n\, t / (\bar{\alpha} t + t^* )$ iterations. This way, the effective sample size of DA is given by 
\[
  \textrm{ESS}_{\rm DA} = \frac{n\, t}{\tau_{\rm DA} \, (\bar{\alpha} t + t^* ) } = \frac{n}{\tau_{\rm DA} \, (\bar{\alpha} + t^*/t ) } ,
\] 
whereas the effective sample size of the standard MH is 
\[
  \textrm{ESS} = \frac{n}{\tau }  .
\] 
Thus, the speed-up factor of DA compared to the standard MH is 
\begin{equation} 
	\frac{\mbox{ESS}_{\mathrm{DA}}}{\mbox{ESS}} = 
	\frac{\tau}{\tau_{\mathrm{DA}}}\frac{1}{\bar{\alpha} + t^* / t}.
	\label{eq:su_factor1}
\end{equation} 
The factor $\tau/\tau_{\mathrm{DA}} \leq 1$ counts the decrease in statistical efficiency by using DA instead of MH, while the factor $\bar{\alpha} + t^* / t$ gives the decrease in average compute cost per iteration. 
It is necessary to address both factors if one is to improve computational efficiency. 
That is, we want $\tau/\tau_{\mathrm{DA}}$ to be close to one and $t^* / t$ to be as small as possible.  

Several forms of reduced models have been used in DA to approximate the posterior. For example, local linearized models have been used in~\cite{ChristenFox2005} and model coarsening based on multiscale finite element is used in~\cite{Efendiev}.
It can be challenging to balance the reduction in CPU time against accuracy of the reduced model. 
Using a lower accuracy reduced model, which runs faster compared to a more accurate one, will reduce the CPU time of MCMC per iteration, but at the risk of lower statistical efficiency since DA is more likely to reject the proposal in Step~\ref{step:da3}.

Here we present a systematic way to modify the statistical model of the likelihood function without changing the reduced model.
By including the statistics of the numerical error of the reduced model in the likelihood, our corrected approximate posterior can  potentially improve the acceptance rate in step~\ref{step:da3}, at no extra cost.
This way, the statistical efficiency, and hence the speed-up factor, of the DA can be improved by using the same reduced model.

\subsection{Adaptive MCMC}
\label{sec:adapt}
A key ingredient of our modified likelihood is the use of adaptive MCMC to estimate posterior error statistics of the reduced model.  
Towards this goal, we design an adaptive delayed acceptance algorithm that both adaptively adjusts the proposal distribution and also adapts to the posterior error statistics in the likelihood using the MCMC sample history.
The adaptation of the likelihood is significantly different from existing adaptive MCMC algorithms which only focus on adapting the proposal. 
This offers new insights in the adaptive construction of goal-oriented approximations to the likelihood for posterior exploration. 
In this paper, we extend the regularity conditions  established by \cite{adaptive_roberts2007,LRR2013} to our case, to guarantee the ergodicity of our adaptive scheme. 

We first review classical adaptive MCMC methods that focuses on tuning proposal distributions using past MCMC samples. 
This includes the classical adaptive Metropolis (AM) algorithm \cite{adaptive_haario2001} and our modification that suits our case studies. 

\subsubsection{Adaptive Metropolis}

An important task in practical MCMC is identifying good proposal distributions that can minimize autocorrelation. As the forward model used in our case studies does not have adjoint capabilities, advanced MCMC proposals, such as the stochastic Newton \cite{MCMC:MWBG_2012} and the likelihood-informed dimension-independent proposal \cite{MCMC:CLM_2016} that rely on derivatives of the posterior, cannot be used here. We consider the Gaussian random walk proposal
\[
q({\bf x},\cdot) = \N({\bf x},\sigma^2{\boldsymbol\Sigma}),
\] 
where $\boldsymbol\Sigma$ is some covariance matrix and $\sigma$ is a scalar that dictates the jump size.
Under certain technical assumptions, the covariance of the posterior can be a good choice for $\boldsymbol\Sigma$. 
There are two ways of choosing an optimal scaling $\sigma$, see Roberts and Rosenthal \cite{mcmc_roberts2001} and references therein for detailed analysis. 
One is to set $\sigma = 2.38/\sqrt{d}$, where $d$ is the dimension of the parameter. 
Alternatively, one can choose an scaling $\sigma$ such that the acceptance rate of MCMC is about $0.234$.
The values $2.38$ and $0.234$ are numerical approximations to analytical results.
These two choices are equivalent under strict technical assumptions \cite{mcmc_roberts2001}, however the latter often demonstrates better efficiency in practice. 

Since the posterior covariance is unknown before carrying posterior sampling, Harrio et al. \cite{adaptive_haario2001} introduced the adaptive Metropolis that estimates the posterior covariance on-the-fly using past posterior samples generated during the MCMC simulation. 
%
%
AM uses $\boldsymbol\Sigma\approx\boldsymbol\Sigma_n$ where $\boldsymbol\Sigma_n$ is the empirical posterior covariance evaluated over $n$ iterations, with the scale is chosen as $\sigma = 2.38/\sqrt{d}$. 
The scaled empirical posterior covariance is mixed with a fixed Gaussian distribution, $\N({\bf x},(0.1^2/d){\bf I}_d)$ to avoid the situation that  $\boldsymbol\Sigma_n$ is singular.
The algorithm is shown in Alg.~\ref{alg:am}. 

\begin{algorithm}[h!]
\caption{Adaptive Metropolis}
\label{alg:am}
At iteration $n$, given ${\bf x}_n = {\bf x}$, then ${\bf x}_{n+1}$ is determined in the following way:
\begin{enumerate}
\item 
Given the empirical covariance estimate $\Sigma_n$ of the target distribution up to step $n$ and a small positive constant $\beta$, propose a new state ${\bf y}$ from the proposal
\begin{equation} 
q_n({\bf x},\cdot) = \left\{ \begin{array}{ll} \N\big({\bf x},\frac{0.1^2}{d} {\bf I}_d\big) & n \leq 2d \\ \N\big({\bf x},(1-\beta)\frac{2.38^2}{d}\boldsymbol\Sigma_n+\beta \frac{0.1^2}{d}{\bf I}_d\big) & n > 2d \end{array} \right.,
\label{eq:am}
\end{equation} 
where $d$ is the dimension of the parameter.
\item With probability $\min\big\{1, \pi_{\rm post}({\bf y} \, | \, \tilde{\bf d}) / \pi_{\rm post}({\bf x} \, | \, \tilde{\bf d}) \big\}$, ${\bf x}_{n+1} = {\bf y}$, otherwise ${\bf x}_{n+1} = {\bf x}$.
\end{enumerate}
\end{algorithm}

Note that the proposal is modified in every iteration of adaptive MCMC sampling. Thus, it becomes non-trivial to establish ergodicity of the resulting algorithm. For a non-adaptive proposal the transition kernel of MH satisfies the detailed balance condition
\[
\pi_{\rm post}({\bf x}_n \, | \, \tilde{\bf d})  K({\bf x}_{n}, {\bf x}_{n+1}) = \pi_{\rm post}({\bf x}_{n+1} \, | \, \tilde{\bf d}) K({\bf x}_{n+1}, {\bf x}_{n}) ,
\]
and hence reversibility.
For an adaptive proposal, detailed balance is clearly violated as the proposal and the transition kernel depend on the iteration number $n$.
%
%
Results by \cite{adaptive_haario2001,adaptive_andr2006,adaptive_roberts2007,adaptive_atch2005,LRR2013} and others have established ergodicity for some adaptive MCMC proposals using differing techniques.
Roberts and Rosenthal \cite{adaptive_roberts2007} provided simplified regularity conditions required for ergodicity, namely simultaneous ergodicity and diminishing adaptation, that provides a viable route to establishing ergodicity of many adaptive MCMC algorithms, and is the route we take here.

\subsubsection{Grouped components adaptive Metropolis}
\label{sec:gcam}
As demonstrated by a set of numerical examples \cite{phd_cui}, the scaling $2.38/\sqrt{d}$ often shows suboptimal statistical performance, as the technical assumptions used to derive the scale are typically too restrictive for highly non-Gaussian posterior distributions with correlated parameters. 
In addition, it may be computationally costly to estimate the covariance $\Sigma$ for problems with high parameter dimensions.
To overcome these limitations, we present the grouped components adaptive Metropolis (GCAM) proposal that separately estimates the covariance $\Sigma$ and the scale $\sigma$ on different groups in a partition of the parameter coordinates.
%
%
Suppose we have $L$ groups of components $\mathcal{I}_1,\ldots,\mathcal{I}_L$, $\mathcal{I}_j \subseteq \{1,2,\ldots,d\}$ with each group $\mathcal{I}_j$ associated with a scale variable $\sigma_{j}$. Let $d_j$ be the number of elements of group $\mathcal{I}_j$, and $-\mathcal{I}_j = \{1,2,\ldots,d\} \setminus \mathcal{I}_j$. 
The GCAM proposal is shown in Alg. \ref{alg:gcam}.

\begin{algorithm}[h!]
\caption{Grouped Components Adaptive Metropolis}
\label{alg:gcam}
At iteration $n$, given ${\bf x}_n = {\bf x}$, then ${\bf x}_{n+1}$ is determined in the following way:
\begin{enumerate}
\item Initialize $\bf y = x$, for all $j = 1\ldots,L$:
\begin{itemize}
  \item Given the empirical covariance $\boldsymbol\Sigma_{n,\mathcal{I}_j}$ for the components $\mathcal{I}_j$ estimated from past samples and a small positive constant $\beta$, draw a $d_j$ dimensional random variable $\bf z$ from the proposal
				\begin{equation} 
					q_n({\bf x}_{\mathcal{I}_j},\cdot) = \left\{ \begin{array}{ll} \N\big({\bf x}_{\mathcal{I}_j},\frac{0.1^2}{d_j} {\bf I}_{d_j}\big) & n \leq 2d_j \\ \N\Big({\bf 		x}_{\mathcal{I}_j},\frac{\sigma_{j}^2}{\max_{i\in\mathcal{I}_j}\{\boldsymbol\Sigma_{n,\mathcal{I}_j}(i,i)\} }\big(\boldsymbol\Sigma_{n,\mathcal{I}_j} + \beta{\bf I}_{d_j}\big)\Big) & n > 2d_j		\end{array} \right..
					\label{eq:gcam}
				\end{equation} 
	\item With probability $\min\left\{ 1, \pi_{\rm post}({\bf y}_{-\mathcal{I}_j}, {\bf z}\,|\,{\bf \tilde{d}}) / \pi_{\rm post}({\bf y} \, | \, \tilde{\bf d})  \right\}$, set ${\bf y}_{\mathcal{I}_j} = \bf z$, otherwise ${\bf y}_{\mathcal{I}_j}$ unchanged.
\end{itemize}
\item Then ${\bf x}_{n+1} = \bf y$ after updating all the $L$ groups of components.
\item For a pre-specified batch number $N$, if $n \bmod N = 0$, for all $j = 1\ldots,L$:
\begin{itemize}
	\item Calculate the acceptance rate $\hat{\alpha}_j$ from the past $N$ updates on $j$th group of components,
	\item If $\hat{\alpha}_j > 0.234$, $\sigma_{j} = \sigma_{j} \,\exp(\delta)$, otherwise $\sigma_{j} = \sigma_{j} \, \exp(-\delta)$.
\end{itemize}
Here, $\delta = \min\{0.01, \sqrt{N/n}\}$.
\end{enumerate}
\end{algorithm} 
%

In the proposal distribution~\eqref{eq:gcam}, $1/\max_{i\in\mathcal{I}_j}\{\boldsymbol\Sigma_{n,\mathcal{I}_j}(i,i)\}$ gives the inverse of the largest variance among parameters in the group $\mathcal{I}_j$. This factor approximately normalizes the covariance matrix $\boldsymbol\Sigma_{n,\mathcal{I}_j}$, and hence avoids the interaction of $\boldsymbol\Sigma_{n,\mathcal{I}_j}$ and $\sigma_j$ during the adaptation. 
Such numerical treatment makes the scale variable $\sigma_j$ stabilize faster. 
The ergodicity of GCAM can be shown using the simplified conditions of Roberts and Rosenthal. \cite{adaptive_roberts2007} We will discuss this further in Section 4.
%

%
\section{Approximations to the Forward Map and Posterior Distribution}
\label{sec:approx}

Apart from the high dimensionality and complex nature of the posterior, a significant computational challenge arises from the high computational cost of evaluating the posterior density that entails computationally demanding numerical schemes used by the forward model $F(\cdot)$.
For example, the 3D geothermal reservoir model presented in Section~\ref{sec:geo} has about ten thousand parameters, and each model evaluation takes about 30 to 50 minutes CPU time. 

To improve the computational efficiency of MCMC as discussed previously, we approximate the likelihood, and hence the posterior, by employing reduced models, denoted by $F^*(\cdot)$. Efficiency of the MCMC requires the reduced model to be accurate and cheap, though these requirements are application specific. 
The backbone of the approximation developed here is a given reduced model built using existing techniques, including grid coarsening~\cite{upscaling_christie_2001,model_kaipio1,Efendiev,phd_cui}, linearization of the forward model~\cite{ChristenFox2005}, and projection-based methods~\cite{mor_BGW_2015, MOR_DEIM_reservoir, mor_LWG_2015, mor_CMW_2015, mor_APL_2016}.
Our key contribution here is to present a new way to improve the approximation to the likelihood function by considering the posterior statistics of the numerical error of the reduced model.
This leads to the adaptive delayed acceptance algorithm with substantial improvement in the computational efficiency compared to the classical delayed acceptance or surrogate transition algorithms.

In this section, we first present posterior approximation using reduced models and the classical approximation error models, then discuss various ways to estimate posterior error statistics and correct the approximated posteriors. 


\subsection{Approximation Error Models}
\label{sec:aem}
As discussed above, there are many possible ways to build the reduced model $F^*(\cdot)$. 
The only restriction we impose on $F^*(\cdot)$ is the condition required for DA to be valid~\cite{ChristenFox2005}, essentially that  $F^*(\cdot)$ is finite valued, which is virtually no restriction in practice. 
We start with the common approximation to the posterior that simply uses $F^*(\cdot)$ in place of the true forward map $F(\cdot)$.
\begin{appr}
\label{appr:coarse}
Approximate posterior distribution using $F^*(\cdot)$ in place of $F(\cdot)$, which has the probability density function
\begin{equation} 
\pi_{\rm post}^*({\bf x} \, | \, \tilde{\bf d}) \propto \exp\left[ -\frac{1}{2}  \big( F^*({\bf x})-{\bf \tilde{d}} \big)^\top \boldsymbol\Sigma_{\bf e}^{-1} \big( F^*({\bf x})-{\bf \tilde{d}} \big) \right] \pi_{\rm prior}({\bf x}).
\label{eq:coarse_post}
\end{equation} 
\end{appr}
Using Approximation~\ref{appr:coarse} in DA, Alg.~\ref{alg:da}, gives the surrogate transition method~\cite{mcmc_liu,Efendiev}.

The reduced model usually has a non-negligible discrepancy with the forward model, and hence the approximation in Eqn.~\eqref{eq:coarse_post}, by itself,  can result in biased estimates while producing uncertainty intervals that are too small. This effect was investigated by~\cite{stats_inverse,model_kaipio1}, who noted that this is one way to perform an ``inverse crime''. This indicates that the approximate posterior has displaced support and is too narrow to include the support of the true posterior. 

One possible remedy is afforded by considering the statistics of the numerical error of the reduced model. Given a reduced model $F^*(\cdot)$, Eqn.~\eqref{eq:cali} can be expressed as
\begin{eqnarray}
{\bf \tilde{d}} & = & F^*({\bf x}) + \big( F({\bf x}) - F^*(\bf x) \big) + \bf e\nonumber \\
& =  & F^*({\bf x}) + B({\bf x}) + \bf e,
\label{eq:enhanm}
\end{eqnarray} 
where  $B({\bf x})$ is the model reduction error between the true forward model and the reduced model. Assuming that the model reduction error is independent of the model parameters and normally distributed gives the approximation error model (AEM)~\cite{model_kaipio1}
\[
{\bf \tilde{d}}  =  F^*({\bf x}) + B + \bf e,
\] 
where $B \sim N({\boldsymbol \mu}_B, \boldsymbol\Sigma_B)$. Kaipio \& Somersalo~\cite{model_kaipio1} showed that this improved the approximation of the posterior distribution, compared to Approximation~\ref{appr:coarse}, in their applications, using an {\it a priori} construction of the AEM. That is, before utilizing data and solving the inverse problem, the AEM was empirically estimated over the prior distribution. The resulting estimates for mean and covariance of $B$ are
\begin{eqnarray}
\mu_B & = & \int_{\mathcal{X}} B({\bf x}) \pi_{\rm prior}({\bf x}) d{\bf x} 
       \approx  \frac{1}{L}\sum_{i = 1}^{L} B({\bf x}_i),  \label{eq:muprior}\\
\Sigma_B & = & \int_{\mathcal{X}} \big( B({\bf x}) - \mu_B \big) \, \big( B({\bf x}) - \mu_B \big)^\top \pi_{\rm prior}({\bf x}) d{\bf x} 
       \approx  \frac{1}{L-1}\sum_{i = 1}^{L} \big( B({\bf x}_i) - \mu_B \big) \,\big( B({\bf x}_i) - \mu_B \big)^\top, \label{eq:sigmaprior}
\end{eqnarray} 
where $B({\bf x})$ is defined by Eqn.~\ref{eq:enhanm}, and ${\bf x}_i \sim \pi_{\rm prior}({\bf x}), i = 1,\cdots,L$, are $L$ samples drawn from the prior distribution.
Using ${\boldsymbol \mu}_B$ and $\boldsymbol\Sigma_B$ estimated from the prior distribution, we have the following approximate posterior.

\begin{appr}
\label{appr:aempr}
Approximate posterior distribution using the reduced model $F^*(\cdot)$ and AEM estimated from the prior, which has the probability density function
\begin{equation} 
\pi_{\rm post}^*({\bf x} \, | \, \tilde{\bf d}) \propto \exp\left[ -\frac{1}{2}  \big( F^*({\bf x})+\boldsymbol\mu_{B}-{\bf \tilde{d}} \big)^\top \big(\boldsymbol\Sigma_{B}+\boldsymbol\Sigma_{\bf e}\big)^{-1} \big( F^*({\bf x})+\boldsymbol\mu_{B}-{\bf \tilde{d}} \big) \right] \pi_{\rm prior}({\bf x}).
\label{eq:a_enhan_prior}
\end{equation} 
\end{appr}

\subsection{Posterior Approximation Error Models}
\label{sec:postAEM}
There are two major drawbacks when using the AEM estimated from the prior. 
Firstly, 
the support of the prior is typically quite different to the support of the posterior, as the observed data in the likelihood necessarily make the posterior concentrate compared to the prior. 
Hence, the AEM may be reasonable over the prior distribution but the $L$ samples may not include any samples with appreciable posterior probability so the AEM could be far from optimal over the support of the posterior distribution.  
Secondly, the {\it a priori} approach requires appreciable pre-computation to construct the AEM.

We improve the AEM, and hence make a better approximate posterior distribution, by empirically estimating the AEM over the posterior distribution. That is, we estimate the mean and covariance of the AEM by 
\begin{eqnarray}
\mu_B & = & \int_{\mathcal{X}} B({\bf x}) \pi_{\rm post}({\bf x} \, |\, \tilde{\bf d}) d{\bf x} , \label{eq:postmu} \\
\Sigma_B & = & \int_{\mathcal{X}} \big(B({\bf x}) - \mu_B\big)\,\big(B({\bf x}) - \mu_B\big)^\top \pi_{\rm post}({\bf x} \, |\, \tilde{\bf d}) d{\bf x}. \label{eq:postsigma} 
\end{eqnarray} 

For each accepted posterior sample of DA (Alg.~\ref{alg:da}), we can calculate the model error between full model and the reduced model, since both are evaluated at each acceptance, and define
$B_{{\bf x}_{n-1}}({\bf x}_{n})=B({\bf x}_{n})-B({\bf x}_{n-1})$.
Since DA can sample the full posterior, we can adaptively estimate $\mu_B$ and $\Sigma_B$ from the posterior samples during the simulation of DA.
A carefully designed adaptive MCMC can make the \textit{a posteriori} estimates of $\mu_B$ and $\Sigma_B$ converge to the true values given in Eqns~\ref{eq:postmu} and \ref{eq:postsigma}. 
%
%
Let $n$ denote the MCMC iteration, the estimated mean and covariance of AEM denoted by $\bar{\boldsymbol\mu}_{B,n}$ and $\bar{\boldsymbol\Sigma}_{B,n}$, respectively, can be iteratively updated as
\begin{eqnarray}
  \label{eq:updatemu1}
  \bar{\boldsymbol\mu}_{B,n}  & = & \frac{1}{n}\big[(n-1)\bar{\boldsymbol\mu}_{B,n-1} + B_{{\bf x}_{n-1}}({\bf x}_n)\big],\\
  	\bar{\boldsymbol\Sigma}_{B,n} & = &  \frac{1}{n-1} \big[ (n-2)\bar{\boldsymbol\Sigma}_{B,n-1}  + B_{{\bf x}_{n-1}}({\bf x}_{n})B_{{\bf x}_{n-1}}({\bf x}_{n})^\top\big].
  \label{eq:updatesigma1}
\end{eqnarray} 
This defines an approximate posterior that changes adaptively over MCMC simulations.
\begin{appr}
\label{appr:aempost}
Adaptive approximate posterior distribution using the reduced model $F^*(\cdot)$ and AEM adaptively estimated over the posterior, giving the posterior probability density function at iteration $n$
\begin{equation} 
\pi_{n,\rm post}^*({\bf x} \, | \, \tilde{\bf d}) \propto \exp\left[ -\frac{1}{2}  \big( F^*({\bf x})+\bar{\boldsymbol\mu}_{B, n}-{\bf \tilde{d}} \big)^\top \big(\bar{\boldsymbol\Sigma}_{B, n}+\boldsymbol\Sigma_{\bf e}\big)^{-1} \big( F^*({\bf x})+\bar{\boldsymbol\mu}_{B,n}-{\bf \tilde{d}} \big) \right] \pi_{\rm prior}({\bf x}).
\label{eq:a_enhan_post}
\end{equation} 
\end{appr}

We can use the adaptive posterior in Approximation~\ref{appr:aempost} within DA to speed-up MCMC sampling, while using the resulting posterior samples to simultaneously update the AEM in the approximation. 
%
%
A further advantage of this adaptive approach is that it does not require any pre-computation to estimate  the AEM before setting up a MCMC simulation. Indeed, in all computational experiments we find that building the AEM over the posterior leads to better statistical efficiency in the MCMC, than when the AEM is estimated over the prior, and so gives a method that both does not require precomputation and also gives a more computationally efficient MCMC.

A critical issue arises because the approximate posterior changes over iterations in the MCMC simulation. As with the adaptive Metropolis case, it becomes unclear whether DA using this adaptive approximation is ergodic. We extend the current framework for adaptive MCMC~\cite{adaptive_roberts2007,adaptive_haario2001}, in Section 4, to establish ergodicity of the resulting DA scheme using adaptive approximate posteriors. 

\subsection{State-dependent Approximation Error Models}

Christen \& Fox~\cite{ChristenFox2005} demonstrated DA using a local linearization of the forward map as the approximate forward map, which is a local reduced model that depends on the current state ${\bf x}$ of the MCMC. For the applications we present in Section~\ref{sec:geo}, we utilize the existing Fortran code TOUGH2 \cite{tough2} to simulate the forward map. This package does not give access to derivatives (nor adjoints, etc) and so we form a reduced model $F^*(\cdot)$ by using a coarsened discretization. This reduced model  depends only on the point ${\bf y}$ at which it is evaluated, but not the state of the MCMC.

In many cases, including in the applications we consider here, both the true forward map and reduced model are $P$-H\"{o}lder continuous, i.e., $\|F({\bf y})-F({\bf x})\| \leq C\|{\bf y-x}\|^P$ for some $C>0$ and $P>0$. Then, when using DA, a local improvement to the state-independent reduced model can be made, for little additional computational cost, by using the values of $F({\bf y})$ and $F^*({\bf y})$ for points  ${\bf y}$ that are accepted, and hence become the state of the chain.
We define a deterministic state-dependent reduced model using a zeroth-order correction to the reduced model, as follows.
\begin{appr}
\label{appr:local}
State-dependent reduced model and approximate posterior distribution: Suppose that at iteration $n$, the Markov chain has state ${\bf x}_n = {\bf x}$. For a proposed state ${\bf y} \sim q({\bf x},\cdot)$, the state-dependent reduced model  $F^*_{\bf x}(\cdot)$ is 
\begin{equation} 
F^*_{\bf x}({\bf y}) = F^*({\bf y}) + \big(F({\bf x}) - F^*({\bf x})\big).
\label{eq:rom_l}
\end{equation} 
The resulting approximate posterior density function is
\begin{equation} 
\pi^*_{{\bf x}}({\bf y} \, | \, \tilde{\bf d})  \propto \exp\left[-\frac{1}{2}  \big( F^*_{\bf x}({\bf y})-{\bf \tilde{d}}\big)^\top \boldsymbol\Sigma_{\bf e}^{-1}\big(F^*_{\bf x}({\bf y})-{\bf \tilde{d}}\big)\right] \pi_{\rm prior}({\bf y}).
\end{equation} 
\end{appr}
It is worth mentioning that the state-dependent reduced model~\eqref{eq:rom_l} has the desirable property that $F^*_{\bf x}({\bf x})=F({\bf x})$. 
The error structure of the state-dependent reduced model~\eqref{eq:rom_l} can also be estimated by employing the AEM. In particular,  Approximation~\ref{appr:aempost} and \ref{appr:local} can be combined, at no significant increase in computational cost, to give a more accurate approximation to the posterior distribution. The state-dependent model reduction error, for the zeroth-order correction \eqref{eq:rom_l}, is 
\[
B^{}_{\bf x}({\bf y}) = F({\bf y}) - F^*_{\bf x}({\bf y}),
\]  
where $B_{\bf x}({\bf x}) = {\bf 0}$.
This way, the mean and covariance of the AEM in Approximation \ref{appr:aemsd} with reduced model \eqref{eq:rom_l} are  
\begin{eqnarray}  
\label{eq:meanAAEM}
  \boldsymbol\mu_{B} &=&\quad\textrm{E}_{\pi_{\rm post}}\Big[\int_{\mathcal{X}} B_{\bf x}({\bf y}) K({\bf x, y}) d{\bf y}\Big], \\
   \boldsymbol\Sigma_{B}&=&\textrm{Cov}_{\pi_{\rm post}}\Big[\int_{\mathcal{X}} B_{\bf x}({\bf y}) K({\bf x, y}) d{\bf y}\Big],
   \label{eq:covAAEM}
\end{eqnarray} 
where $K({\bf x, y})$ is the transition kernel implemented by the MCMC iteration.
The mean of the AEM~\eqref{eq:meanAAEM} for reduced model \eqref{eq:rom_l} can be shown to be ${\bf 0}$, as follows.
\begin{eqnarray*}
  \E_{\pi_{\rm post}}\Big[\int B_{\bf x}({\bf y}) K({\bf x, y}) d{\bf y}\Big]  
  & 
  = \int_{\mathcal{X}} \int_{\mathcal{X}} \big(F({\bf y}) - F^*({\bf y})\big) \pi_{\rm post}({\bf x}\, | \, \tilde{\bf d}) K({\bf x, y}) d{\bf y} d{\bf x} \\ 
  & 
  - \int_{\mathcal{X}} \int_{\mathcal{X}} \big(F({\bf x}) - F^*({\bf x})\big) \pi_{\rm post}({\bf x}\, | \, \tilde{\bf d}) K({\bf x, y}) d{\bf y} d{\bf x}
\end{eqnarray*} 
with the two terms on the right canceling because the kernel $K$ satisfies the detailed balance condition $\pi_{\rm post}({\bf x}\, | \, \tilde{\bf d}) K({\bf x, y}) =  \pi_{\rm post}({\bf y}\, | \, \tilde{\bf d}) K({\bf y, x})$. 
%
%
Since we have $\boldsymbol\mu_{B}={\bf 0}$, the covariance of the model reduction error $B_{\bf x}({\bf y})$ can be computed adaptively at iteration $n$ by the inductive formula
\begin{equation} 	
	\hat{\boldsymbol\Sigma}_{B,n} =  \frac{1}{n-1} \left[ (n-2)\hat{\boldsymbol\Sigma}_{B,n-1}  + B_{{\bf x}_{n-1}}({\bf x}_{n})B_{{\bf x}_{n-1}}({\bf x}_{n})^\top\right].
	\label{eq:appr41}
\end{equation} 
The approximate posterior distribution based on state-dependent reduced model \ref{eq:rom_l} and AEM estimated from the posterior takes the following form.
\begin{appr}
\label{appr:aemsd}
AEM built over the posterior distribution with a state-dependent reduced model: Suppose that at iteration $n$ the Markov chain is at state ${\bf x}_n = {\bf x}$ and a proposed state is ${\bf y} \sim q({\bf x},\cdot)$.  The state-dependent approximate posterior density function is given by
\begin{equation} 
\pi^*_{n,\bf x}({\bf y} \, | \, \tilde{\bf d}) \propto  \exp\left[-\frac{1}{2}  \big(F^*_{\bf x}({\bf y})-{\bf \tilde{d}}\big)^\top \big(\hat{\boldsymbol\Sigma}_{B,n}+\boldsymbol\Sigma_{\bf e}\big)^{-1} \big(F^*_{\bf x}({\bf y})-{\bf \tilde{d}}\big)\right] \pi_{\rm prior}({\bf y}).
\label{eq:a_enhan_lp}
\end{equation} 
\end{appr}
As for Approximation~\ref{appr:aempost} we have to show ergodicity of the resulting adaptive MCMC scheme.

\section{Adaptive Delayed Acceptance Algorithm}
\label{sec:adamh}
The ADA algorithm combines the adaptive AEM built over the posterior, suggested in Section~\ref{sec:approx}, and the adaptation of the proposal distribution as in existing adaptive algorithms in Section~\ref{sec:adapt}. Adaptive estimates of the AEM are made possible by using the DA algorithm, which provides the basic structure of ADA, and also the mechanism for reduction in compute cost per iteration. The use of adaptively improved stochastic approximations means the reduction in statistical efficiency, which is an unavoidable consequence of using DA, may be minimized so the reduction of compute cost per iteration is transferred directly into computational efficiency of the resulting algorithm.

\begin{algorithm}[h!]
\caption{Adaptive delayed acceptance Metropolis-Hastings (ADA)}
\label{al:adamh} 
At iteration $n$, given ${\bf x}_n = {\bf x}$, adaptive proposal $q_n({\bf x}, \,\cdot\,)$, and approximate posterior distribution $\pi^*_{n,{\bf x}}(\,\cdot\, | \,\tilde{\bf d})$, then ${\bf x}_{n+1}$ and updated distributions are determined as follows:
\begin{enumerate}
\item Generate a proposal ${\bf y}\sim q_n({\bf x}, \,\cdot\,)$. \label{step:adamh1} \label{step:adamh2} With probability
\begin{equation*}
\alpha_n({\bf x} ,{\bf y}) = \min\left\{ 1, \frac{\pi^*_{n,{\bf x}}({\bf y}\, | \,\tilde{\bf d}) \, q_n({\bf y}, \bf x)}{ \pi^*_{n,{\bf x}}({\bf x}\, | \,\tilde{\bf d}) \, q_n({\bf x}, \bf y) }\right\} 
\end{equation*}
 promote ${\bf y}$ to be used as a proposal for the following step. Otherwise set ${\bf y} ={\bf x}$ and proceed. 
\item
The proposal distribution at this step is
\begin{equation*}
q^*_n({\bf x} ,{\bf y}) = \alpha_n({\bf x} ,{\bf y})q_n({\bf x}, {\bf y}) + \Big[1-\int_{\mathcal{X}} \alpha_n({\bf x} ,{\bf y})q_n({\bf x}, {\bf y}) \,\dd{\bf y}\Big]\delta_{\bf x}({\bf y}),
\end{equation*}
where $\delta_{\bf x}(\cdot)$ denotes the Dirac mass at ${\bf x}$. 
With probability
\begin{equation*}
 \min\left\{ 1, \frac{\pi_{\rm post}({\bf y} \, | \, \tilde{\bf d})q^*_n({\bf y} ,{\bf x})}{\pi_{\rm post}({\bf x} \, | \, \tilde{\bf d})q^*_n({\bf x},{\bf y})} \right\} 
\end{equation*} 
set ${\bf x}_{n+1}= {\bf y}$.  Otherwise set ${\bf x}_{n+1}=\bf x$.
\label{step:adamh3}
\item Form the updated approximation $\pi^*_{n+1,{\bf x}}(\,\cdot\, | \,\tilde{\bf d})$. \label{step:adamh4}
\item Form the updated adaptive proposal $q_{n+1}({\bf x}, \,\cdot\,)$. \label{step:adamh5}
\end{enumerate}
Here we use the general notation $\pi^*_{n,{\bf x}}(\,\cdot\, | \,\tilde{\bf d})$ to denote the approximate posterior. It can be either state-dependent or state-independent, either adaptive or non-adaptive.
\end{algorithm}

We summarize ADA in Alg.~\ref{al:adamh}. In this algorithm, the proposal $q_n( \cdot , \cdot)$ and its adaptive update in Step~\ref{step:adamh5} may have the form of any of the classical adaptive MCMC algorithms, such as the AM in Alg. \ref{alg:am} and the GCAM in Alg. \ref{alg:gcam}.
We use the general notation $\pi^*_{n,{\bf x}}(\,\cdot\, | \,\tilde{\bf d})$ to denote the approximate posterior. It includes the state-dependent, state-independent, adaptive, and non-adaptive cases.
When the adaptive approximate posteriors \ref{appr:aempost} and \ref{appr:aemsd} are used, the corresponding adaptive error models have to be updated in Step~\ref{step:adamh4} according to Eqn~\ref{eq:updatemu1} and \ref{eq:updatesigma1} and Eqn. \ref{eq:appr41}, respectively.

We note that ADA in Alg. \ref{al:adamh} is not restricted to the form of the approximate posteriors used here. It offers a general framework for constructing other forms of posterior approximations in Step 3 using the forward model evaluated at past posterior samples.
In the rest of this section, we present regularity conditions on the adaptive approximation and adaptive proposal that guarantee ergodicity of ADA. These regularity conditions provide useful guidelines to the construction of other posterior approximations in future research. 


\subsection{Ergodicity conditions and main result}
\label{sec:ergodic}
In this section, we follow the notation in \cite{adaptive_roberts2007} to formalize ADA.
Suppose the parameter space $\mathcal{X}$ is equipped with $\sigma$-algebra $\mathcal{B(X)}$.  
Without loss of generality, we define all the proposal densities, target densities and its approximations with respect to the Lebesgue measure here.

To simplify the notation, we use $\pi(\cdot) \equiv \pi_{\rm post}(\,\cdot\,|\,\tilde{\bf d})$ denote the exact posterior density.
%
%
We parametrize (potentially state-dependent) adaptive approximate posteriors and adaptive proposals using adaptation indices ${\boldsymbol\nu} \in \mathcal{V}$ and $\boldsymbol\xi \in \mathcal{E}$, respectively.
Here $\mathcal{V}$ and $\mathcal{E}$ denote the space of adaptation indices. 
We use the adaptation indices to replace the iteration number $n$ in Algorithm~\ref{al:adamh}, since the former provide unique notations for functions.
This way, we denote the adaptive approximate posterior as $\pi^*_{{\boldsymbol\xi},{\bf x}}(\cdot)$ and the adaptive proposal as $q_{\boldsymbol\nu}({\bf x},\cdot)$ in this section. 
For example, in Approximation~\ref{appr:aempost}, the adaptation index is $\boldsymbol\xi = (\bar{\boldsymbol\mu}_{B,n}, \bar{\boldsymbol\Sigma}_{B,n})$ that define the AEM, while in the GCAM algorithm~\ref{alg:gcam}, the adaptation indices are $\boldsymbol\nu = \{\sigma_{i}, \boldsymbol\Sigma_{n, \mathcal{I}_j}\}_{i = 1}^{L}$. 
%


%
%
%
%

%
%
We first introduce the transition kernels involved in ADA.

For each $\boldsymbol\nu\in\mathcal{V}$, we can define the transition kernel of a single level MH algorithm with target density $\pi(\cdot)$ as
\begin{equation*}
K_{\boldsymbol\nu}({\bf x, y}) = q_{\boldsymbol\nu}({\bf x, y}) \, \alpha_{\boldsymbol\nu}({\bf x, y}) + \Big[ 1 - \int_\mathcal{X}  q_{\boldsymbol\nu}( {\bf x, z} )\,\alpha_{\boldsymbol\nu}( {\bf x, z} ) d {\bf z} \Big] \delta_{\bf x}({\bf y}),
\end{equation*} 
where
\begin{equation*}
\label{eq:adaptive_1st}
\alpha_{\boldsymbol\nu}({\bf x, y}) = \min\left\{ 1, \frac{\pi({\bf y})\, q_{\boldsymbol\nu}\left({\bf y} ,{\bf x}\right)} {\pi({\bf x})\,q_{\boldsymbol\nu}\left({\bf x} ,{\bf y}\right)} \right\}.
\end{equation*} 
This way, $\{K_{\boldsymbol\nu}\}_{\boldsymbol\nu\in\mathcal{V}}$ defines a family of Markov chain transition kernels (associated with MH) on $\mathcal{X}$ with the target $\pi(\cdot)$.

For each pair of $\boldsymbol\nu\in\mathcal{V}$ and $\boldsymbol\xi\in\mathcal{E}$, we have the first step and second step acceptance probabilities
\[
\alpha_{\boldsymbol\nu,\boldsymbol\xi}({\bf x, y}) = \min\left\{ 1, \frac{\pi^*_{{\boldsymbol\xi},{\bf x}}({\bf y})\, q_{\boldsymbol\nu}\left({\bf y} ,{\bf x}\right)}
                     {\pi^*_{{\boldsymbol\xi},{\bf x}}({\bf x})\,q_{\boldsymbol\nu}\left({\bf x} ,{\bf y}\right)} \right\}, \textrm{\;and \;}
\beta_{\boldsymbol\nu,\boldsymbol\xi}({\bf x, y}) = \min\left\{1, \frac{\pi({\bf y}) \, q_{\boldsymbol\nu,\boldsymbol\xi}^*\left({\bf y} ,{\bf x}\right)}{\pi({\bf x}) \, q_{\boldsymbol\nu,\boldsymbol\xi}^*\left({\bf x} ,{\bf y}\right)}   
      \right\},
\] 
respectively, where the effective proposal in the second step has the density 
\begin{equation*}
   q_{\boldsymbol\nu,\boldsymbol\xi}^*\left({\bf x} ,{\bf y}\right) = 
     q_{\boldsymbol\nu}\left({\bf x} ,{\bf y}\right) \, \alpha_{\boldsymbol\nu,\boldsymbol\xi}({\bf x, y})  
     + \Big[ 1-\int_\mathcal{X}  q_{\boldsymbol\nu}\left({\bf x} ,{\bf z}\right)\, \alpha_{\boldsymbol\nu,\boldsymbol\xi}({\bf x, z}) \,\dd{\bf z}\Big] \delta_{\bf x}({\bf y}).
\end{equation*} 
We can define the transition kernel of ADA algorithm with target $\pi(\cdot)$ as
\begin{equation*}
K_{\boldsymbol\nu,\boldsymbol\xi}({\bf x, y}) = q_{\boldsymbol\nu}({\bf x, y}) \, \alpha_{\boldsymbol\nu,\boldsymbol\xi}({\bf x, y})\, \beta_{\boldsymbol\nu,\boldsymbol\xi}({\bf x, y}) + \Big[ 1 - \int_\mathcal{X}  q_{\boldsymbol\nu}( {\bf x, z} )\,\alpha_{\boldsymbol\nu,\boldsymbol\xi}( {\bf x, z} )\,\beta_{\boldsymbol\nu,\boldsymbol\xi}( {\bf x, z} ) d {\bf z} \Big] \delta_{\bf x}({\bf y}).
\end{equation*} 
This way, $\{K_{{\boldsymbol\nu},{\boldsymbol\xi}}\}_{{\boldsymbol\nu}\in\mathcal{V}, {\boldsymbol\xi}\in\mathcal{E}}$ defines a family of Markov chain transition kernels (associated with ADA) on $\mathcal{X}$ with target $\pi(\cdot)$.

In ADA, the adaptation indices ${\boldsymbol\nu}$ and ${\boldsymbol\xi}$ are respectively updated by a $\mathcal{V}$-valued random variable $\boldsymbol\Gamma_n$ and a $\mathcal{E}$-valued random variable $\boldsymbol\Xi_n$ at each step.
In contrast, DA only employs fixed parameters ${\boldsymbol\nu}$ and ${\boldsymbol\xi}$. 
Although for each pair of $\boldsymbol\nu\in\mathcal{V}$ and $\boldsymbol\xi\in\mathcal{E}$, the resulting DA scheme can be ergodic, there is no guarantee that the ADA scheme will be ergodic if the adaptation on ${\boldsymbol\nu}$ and ${\boldsymbol\xi}$ are not carefully constructed. 

Utilizing ergodic theory of standard adaptive MCMC that only adapts on the proposal, we aim to establish regularity conditions for ADA to be ergodic.
The key is to analyze the behaviour of the effective proposal $ q_{\boldsymbol\nu,\boldsymbol\xi}^*\left({\bf x} ,{\bf y}\right)$  in Definition 2 -- which involves both the proposal and the approximate posterior -- and the associated transition kernel $K_{\boldsymbol\nu,,\boldsymbol\xi}({\bf x, y})$ during adaptation.
By considering only the proposal adaptation (indexed by $\boldsymbol\nu\in\mathcal{V}$), Roberts and Rosenthal \cite{adaptive_roberts2007} provide conditions for constructing ergodic adaptive MCMC algorithms. 
We first restate Theorem 1 of Roberts and Rosenthal \cite{adaptive_roberts2007} with a small extension required for ADA, including both adaptation indices $\boldsymbol\nu\in\mathcal{V}$ and $\boldsymbol\xi\in\mathcal{E}$. 
\begin{theorem}
\label{theo1}
Suppose we have a target density $\pi(\cdot)$  defined on a parameter space $\mathcal{X}$. 
Suppose an ADA algorithm with $\mathcal{V}$-valued proposal adaptation index and $\mathcal{E}$-valued approximation adaptation index is ergodic for $\pi(\cdot)$ for given $\boldsymbol\nu\in\mathcal{V}$ and $\boldsymbol\xi\in\mathcal{E}$. 
Under the following conditions, ADA is ergodic:

\begin{enumerate}
\item (Simultaneous uniform ergodicity.) For all $\epsilon > 0$, there exist $n = n(\epsilon) \in \mathbb{N}$ such that 
\[
\| K_{\boldsymbol\nu,\boldsymbol\xi}^n({\bf x, \cdot}) - \pi(\cdot) \|_{TV} \leq \epsilon,
\]
for any ${\bf x} \in\mathcal{X}$, $\boldsymbol\nu\in\mathcal{V}$ and $\boldsymbol\xi\in\mathcal{E}$. Here $K^n_{\boldsymbol\nu,\boldsymbol\xi}({\bf x, y})$ is the $n$-step transition kernel defined as
\[
K^n_{\boldsymbol\nu,\boldsymbol\xi}({\bf x, y}) = \int_\mathcal{X}  K^{n-1}_{\boldsymbol\nu,\boldsymbol\xi}({\bf x, z}) K_{\boldsymbol\nu,\boldsymbol\xi}({\bf z, y}) d{\bf z}.
\]
\item (Diminishing adaptation.) In two consecutive iterations $n$ and $n+1$, the transition kernels satisfy
\[
\lim_{n \rightarrow \infty} \sup_{\bf x \in \mathcal{X}}\| K_{\boldsymbol\nu_{n+1},\boldsymbol\xi_{n+1}}({\bf x, \cdot}) - K_{\boldsymbol\nu_{n},\boldsymbol\xi_{n}}({\bf x, \cdot})\|_{TV} = 0.
\]
\end{enumerate}
Here $\|\lambda_1(\cdot) - \lambda_2(\cdot)\|_{TV} = \sup_{{\bf A}\in\mathcal{B}(\mathcal{X})}\|\lambda_1({\bf A})-\lambda_2({\bf A})\|$ is the total variational distance. \footnote{We denote $\lambda({\bf A}) = \int_{\bf A} \lambda({\bf x}) d {\bf x}$ for ${\bf A}\in\mathcal{B}(\mathcal{X})$.}
\end{theorem}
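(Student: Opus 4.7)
The plan is to reduce the statement to the original Theorem~1 of Roberts and Rosenthal by bundling the two adaptation indices into a single one. Introduce the product space $\mathcal{Y} = \mathcal{V}\times\mathcal{E}$ and, for each $\boldsymbol\eta = (\boldsymbol\nu,\boldsymbol\xi)\in\mathcal{Y}$, let $P_{\boldsymbol\eta}({\bf x},\cdot) = K_{\boldsymbol\nu,\boldsymbol\xi}({\bf x},\cdot)$ denote the ADA transition kernel. The first step is to observe that, for every fixed $\boldsymbol\eta$, the kernel $P_{\boldsymbol\eta}$ is an ordinary delayed-acceptance Metropolis--Hastings kernel with target $\pi(\cdot)$; this was proved in \cite{ChristenFox2005} and follows from the detailed-balance calculation that the first acceptance probability $\alpha_{\boldsymbol\nu,\boldsymbol\xi}$ and the second acceptance probability $\beta_{\boldsymbol\nu,\boldsymbol\xi}$ compose into a transition that is $\pi$-reversible. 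Thus $\{P_{\boldsymbol\eta}\}_{\boldsymbol\eta\in\mathcal{Y}}$ is a family of Markov kernels, each with stationary distribution $\pi$, which is exactly the setting of the Roberts--Rosenthal framework.

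Next I would invoke their coupling argument essentially verbatim. Fix $\epsilon>0$. By simultaneous uniform ergodicity there exists $N = N(\epsilon/2)$ such that $\|P_{\boldsymbol\eta}^N({\bf x},\cdot)-\pi(\cdot)\|_{TV}\leq\epsilon/2$ uniformly in $\bf x$ and $\boldsymbol\eta$. By diminishing adaptation there exists $M$ such that for all $n\geq M$,
\begin{equation*}
\sup_{\bf x\in\mathcal{X}}\|P_{\boldsymbol\eta_{n+1}}({\bf x},\cdot)-P_{\boldsymbol\eta_{n}}({\bf x},\cdot)\|_{TV}\leq\frac{\epsilon}{2N}
\end{equation*}
with probability arbitrarily close to one. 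Writing the marginal law of ${\bf x}_{n+N}$ (starting from ${\bf x}_n = \bf x$ and using the evolving adaptation) and comparing it to the law obtained by running $N$ steps of the \emph{frozen} kernel $P_{\boldsymbol\eta_n}$, a telescoping bound in total variation gives an additional error of at most $N\cdot\epsilon/(2N)=\epsilon/2$. Combining the two bounds yields $\|\mathcal{L}({\bf x}_{n+N})-\pi\|_{TV}\leq\epsilon$ for $n$ large, which is the required ergodicity.

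The main obstacle will be checking that the machinery of \cite{adaptive_roberts2007} actually applies to our enlarged kernel. Two points require attention. First, one must verify joint measurability of $(\boldsymbol\eta,{\bf x},{\bf y})\mapsto P_{\boldsymbol\eta}({\bf x},{\bf y})$, which reduces to measurability of the proposal $q_{\boldsymbol\nu}$ and of the approximate posterior $\pi^*_{\boldsymbol\xi,{\bf x}}$ in their respective indices; for all approximations in Section~\ref{sec:approx} these densities are continuous in $(\boldsymbol\mu_B,\boldsymbol\Sigma_B)$ and in $\boldsymbol\Sigma$, so measurability is immediate. Second, the diminishing adaptation hypothesis as stated bounds the distance of consecutive kernels deterministically in $\bf x$ but the update on $\boldsymbol\eta_{n+1}$ is itself random; this is handled exactly as in Roberts and Rosenthal by taking the supremum pathwise and exploiting that convergence in probability suffices to push the coupling through. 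A minor additional wrinkle, specific to the delayed-acceptance setting, is that a small change in $\boldsymbol\xi$ enters the kernel both through $\alpha_{\boldsymbol\nu,\boldsymbol\xi}$ and through the effective proposal $q^*_{\boldsymbol\nu,\boldsymbol\xi}$ appearing inside $\beta_{\boldsymbol\nu,\boldsymbol\xi}$; a straightforward triangle-inequality decomposition of $\|P_{\boldsymbol\eta_{n+1}}-P_{\boldsymbol\eta_n}\|_{TV}$ into terms controlled by $\|q_{\boldsymbol\nu_{n+1}}-q_{\boldsymbol\nu_n}\|_{TV}$ and by variations of $\pi^*_{\boldsymbol\xi,{\bf x}}$ in $\boldsymbol\xi$ confirms that Condition~2 of the theorem is indeed what must be imposed on the adaptation update rule, with no further structural assumption needed beyond those already implicit in the algorithm.
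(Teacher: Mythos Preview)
Your approach is correct and is exactly the one the paper takes: the paper's proof is the single sentence ``The proof directly follows from the Theorem 1 of Roberts and Rosenthal,'' which is precisely your reduction via the product index $\boldsymbol\eta=(\boldsymbol\nu,\boldsymbol\xi)\in\mathcal{V}\times\mathcal{E}$. Your sketch of the coupling argument is a faithful unpacking of that citation. One minor remark: the measurability check and the triangle-inequality decomposition in your final paragraph are not needed for Theorem~\ref{theo1} itself, since simultaneous uniform ergodicity and diminishing adaptation are \emph{hypotheses} here, not conclusions; that material belongs to (and indeed reappears in) the proofs of Theorems~\ref{theo2} and~\ref{theo3}, where those conditions are verified for the specific ADA kernels.
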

\begin{proof}
The proof directly follows from the Theorem 1 of Roberts and Rosenthal. \cite{adaptive_roberts2007}.
\end{proof}

As sufficient conditions for an MCMC with adaptive proposals satisfying simultaneous uniform ergodicity and diminishing adaptation are well understood in the literature, here we focus on establishing sufficient conditions on the adaptation of posterior approximations to make the resulting ADA ergodic. 
We prove the following two theorems that separately show ADA can satisfy simultaneous uniform ergodicity and diminishing adaptation by imposing mild regularity conditions. 



\begin{theorem}
\label{theo2}
Suppose an ADA algorithm with the target density $\pi(\cdot)$ has a family of first step proposal densities $\{ q_{\boldsymbol\nu}({\bf x, \cdot})\}_{{\boldsymbol\nu}\in\mathcal{V}}$, a family of approximate target densities $\{ \pi^*_{\boldsymbol\xi, x}({\cdot})\}_{{\boldsymbol\nu}\in\mathcal{V}, {\boldsymbol\xi}\in\mathcal{E}}$,  and a family of transition kernels $\{ K_{\boldsymbol\nu,\boldsymbol\xi}({\bf x, \cdot})\}_{{\boldsymbol\nu}\in\mathcal{V}, {\boldsymbol\xi}\in\mathcal{E}}$.
The transition kernels satisfy simultaneous uniform ergodicity given the following sufficient conditions. 
\begin{enumerate}
  \item The spaces $\mathcal{X}$, $\mathcal{V}$, and $\mathcal{E}$ are compact. \label{cond:a}
  \item The target density is Lipschitz continuous in ${\bf x}$. \label{cond:f}
  \item For any $\boldsymbol\nu \in \mathcal{V}$, if one applies the proposal $ q_{\boldsymbol\nu}({\bf x, \cdot})$ in a single-level MH with target $\pi(\cdot)$, then each transition kernel $K_{\boldsymbol\nu}$ (as defined in Definition 1) is ergodic for $\pi(\cdot)$. \label{cond:b}
  \item For any ${\boldsymbol\nu} \in\mathcal{V}$, the proposal $q_{\boldsymbol\nu}({\bf x},\cdot)$ is uniformly bounded, and for each fixed ${\bf y} \in \mathcal{X}$, the mapping $({\bf x},{\boldsymbol\nu}) \mapsto q_{\boldsymbol\nu}({\bf x} ,{\bf y})$ is Lipschitz continuous in ${\bf x}$ and $\boldsymbol\nu$. \label{cond:c}
  \item The mapping  $({\bf x},{\bf y},\boldsymbol\xi) \mapsto \log \pi^*_{\boldsymbol\xi,\bf x}({\bf y})$ is Lipschitz continuous in ${\bf x}$, ${\bf y}$, and $\boldsymbol\xi$. \label{cond:g}
\end{enumerate}
\end{theorem}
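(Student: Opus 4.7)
The plan is to establish a uniform Doeblin-type minorization of the kernels in the family $\{K_{\boldsymbol\nu,\boldsymbol\xi}\}_{\boldsymbol\nu\in\mathcal{V},\boldsymbol\xi\in\mathcal{E}}$, from which simultaneous uniform ergodicity follows by the standard coupling argument. Specifically, I aim to produce a constant $\delta>0$ and a probability measure $\mu$ on $\mathcal{X}$ such that
\[
K_{\boldsymbol\nu,\boldsymbol\xi}({\bf x},A)\;\geq\;\delta\,\mu(A)
\]
for every ${\bf x}\in\mathcal{X}$, every $(\boldsymbol\nu,\boldsymbol\xi)\in\mathcal{V}\times\mathcal{E}$, and every measurable $A\in\mathcal{B}(\mathcal{X})$. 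This at once yields $\|K^n_{\boldsymbol\nu,\boldsymbol\xi}({\bf x},\cdot)-\pi(\cdot)\|_{TV}\leq (1-\delta)^n$ with a rate independent of the adaptation indices, so $n(\epsilon)=\lceil \log\epsilon/\log(1-\delta)\rceil$ suffices to meet the definition of simultaneous uniform ergodicity.

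The first concrete step is to harvest uniform positive upper and lower bounds on every ingredient of $K_{\boldsymbol\nu,\boldsymbol\xi}$ from the compactness and continuity hypotheses. Condition~\ref{cond:a} together with condition~\ref{cond:f} makes $\pi$ continuous on the compact set $\mathcal{X}$, hence bounded between two positive constants (with strict positivity treated as an implicit regularity requirement on the posterior). Condition~\ref{cond:c} delivers a uniform upper bound on $q_{\boldsymbol\nu}({\bf x},{\bf y})$ and, through continuity on the compact product $\mathcal{X}\times\mathcal{X}\times\mathcal{V}$, also a uniform lower bound $q_{\min}>0$. Condition~\ref{cond:g} makes $\log\pi^*_{\boldsymbol\xi,{\bf x}}({\bf y})$ continuous on the compact set $\mathcal{X}\times\mathcal{X}\times\mathcal{E}$, so $\pi^*_{\boldsymbol\xi,{\bf x}}({\bf y})$ is pinched between two positive constants uniformly in all its arguments.

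Next I would propagate these bounds through both Metropolis steps. The pointwise ratio inside $\alpha_{\boldsymbol\nu,\boldsymbol\xi}$ is a product of ratios of quantities that are uniformly bounded above and below, so $\alpha_{\boldsymbol\nu,\boldsymbol\xi}({\bf x},{\bf y})\geq \alpha_\star>0$ for some universal $\alpha_\star$. Writing $q^*_{\boldsymbol\nu,\boldsymbol\xi}$ as its continuous part (bounded below by $\alpha_\star q_{\min}$) plus the atomic mass at ${\bf x}$, and inserting these bounds into the definition of $\beta_{\boldsymbol\nu,\boldsymbol\xi}$, gives $\beta_{\boldsymbol\nu,\boldsymbol\xi}({\bf x},{\bf y})\geq\beta_\star>0$ uniformly. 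Combining, the continuous component of $K_{\boldsymbol\nu,\boldsymbol\xi}({\bf x},\cdot)$ dominates a positive multiple of Lebesgue measure on $\mathcal{X}$, and condition~\ref{cond:b} plus the existence of a single ergodic kernel in the family ensures the minorization measure may be taken as a valid probability measure, yielding the required Doeblin condition.

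The principal obstacle is handling the second-step bound cleanly in the presence of both the atomic mass in $q^*_{\boldsymbol\nu,\boldsymbol\xi}$ and the state-dependence of $\pi^*_{\boldsymbol\xi,{\bf x}}$. The ratio $q^*_{\boldsymbol\nu,\boldsymbol\xi}({\bf y},{\bf x})/q^*_{\boldsymbol\nu,\boldsymbol\xi}({\bf x},{\bf y})$ involves swapping the anchor state from ${\bf x}$ to ${\bf y}$, so to bound $\beta_{\boldsymbol\nu,\boldsymbol\xi}$ uniformly I must control $\pi^*_{\boldsymbol\xi,{\bf y}}(\cdot)/\pi^*_{\boldsymbol\xi,{\bf x}}(\cdot)$ uniformly in all three indices; this is precisely why condition~\ref{cond:g} is phrased as a joint Lipschitz hypothesis on $({\bf x},{\bf y},\boldsymbol\xi)$ rather than coordinate-wise. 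Without that joint control the state-dependent approximation could drive $\beta_{\boldsymbol\nu,\boldsymbol\xi}$ arbitrarily small and break the uniform bound; with it, every ratio entering the argument is pinched into a compact subinterval of $(0,\infty)$, Doeblin's condition closes, and simultaneous uniform ergodicity follows.
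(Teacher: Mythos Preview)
Your Doeblin-minorization route is genuinely different from the paper's argument. The paper does not bound the kernel from below; instead it extends Corollary~5 of Roberts and Rosenthal by showing that the map $({\bf x},\boldsymbol\nu,\boldsymbol\xi)\mapsto\|K^n_{\boldsymbol\nu,\boldsymbol\xi}({\bf x},\cdot)-\pi(\cdot)\|_{TV}$ is jointly continuous on the compact set $\mathcal{X}\times\mathcal{V}\times\mathcal{E}$. It propagates continuity through the rewriting
\[
q_{\boldsymbol\nu}({\bf x},{\bf y})\,\alpha_{\boldsymbol\nu,\boldsymbol\xi}({\bf x},{\bf y})=\min\Big\{q_{\boldsymbol\nu}({\bf x},{\bf y}),\ \frac{\pi^*_{\boldsymbol\xi,{\bf x}}({\bf y})}{\pi^*_{\boldsymbol\xi,{\bf x}}({\bf x})}\,q_{\boldsymbol\nu}({\bf y},{\bf x})\Big\},
\]
then into $\beta_{\boldsymbol\nu,\boldsymbol\xi}$, the staying probability $\rho_{\boldsymbol\nu,\boldsymbol\xi}$, the $n$-step kernel (after splitting off the atom at ${\bf x}$), and finally the TV distance via bounded convergence. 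Pointwise ergodicity---supplied by condition~3 together with the DA ergodicity result of Christen and Fox---is then upgraded to uniform convergence by compactness. Your route, when it applies, is cleaner and delivers an explicit geometric rate $(1-\delta)^n$; the paper's is less quantitative but demands less of the proposal.

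The gap in your argument is the claimed lower bound $q_{\min}>0$. Condition~4 asserts only that $({\bf x},\boldsymbol\nu)\mapsto q_{\boldsymbol\nu}({\bf x},{\bf y})$ is Lipschitz for each \emph{fixed} ${\bf y}$, and that $q_{\boldsymbol\nu}$ is uniformly bounded above; it gives neither joint continuity on $\mathcal{X}\times\mathcal{X}\times\mathcal{V}$ nor strict positivity. Compactness therefore does not produce $q_{\min}>0$, and without it the minorization collapses: the continuous part $q_{\boldsymbol\nu}\alpha_{\boldsymbol\nu,\boldsymbol\xi}\beta_{\boldsymbol\nu,\boldsymbol\xi}$ need not dominate any fixed positive multiple of Lebesgue measure. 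You flagged the analogous implicit positivity for $\pi$, but the one for $q_{\boldsymbol\nu}$ is the binding constraint and is genuinely absent from conditions~1--5. The paper's $\min\{\cdot,\cdot\}$ rewriting keeps everything as products and never divides by $q_{\boldsymbol\nu}$, which is precisely why it gets by without this hypothesis. To run Doeblin under the theorem as stated you would need to add an assumption such as $\inf_{{\bf x},{\bf y},\boldsymbol\nu}q_{\boldsymbol\nu}({\bf x},{\bf y})>0$---true for the Gaussian random-walk proposals used in the paper's examples, but not among the listed conditions.
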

\begin{proof}
We extend the result of Corollary 5 of Roberts and Rosenthal \cite{adaptive_roberts2007} to prove this theorem. 
First note that, since $\mathcal{X}$, $\mathcal{Y}$, and $\mathcal{E}$ are compact, all product spaces are compact in the product topology.
Since for a given pair of indices $(\boldsymbol\nu,\boldsymbol\xi) \in\mathcal{V}\times\mathcal{E}$, the ADA becomes a standard DA, employing Theorem 1 of DA \cite{ChristenFox2005}, conditions~\eqref{cond:a}, \eqref{cond:b}, and \eqref{cond:g} imply that, an Markov chain induced by the transition kernel $K_{{\boldsymbol\nu},{\boldsymbol\xi}}$ is ergodic for $\pi(\cdot)$ for any pair of $(\boldsymbol\nu,\boldsymbol\xi) \in\mathcal{V}\times\mathcal{E}$. 

Recall that the effective proposal in the second step of ADA has density
\begin{equation*}
\label{eq:eff_proposal}
   q_{\boldsymbol\nu,\boldsymbol\xi}^*\left({\bf x} ,{\bf y}\right) = 
     q_{\boldsymbol\nu}\left({\bf x} ,{\bf y}\right) \,\alpha_{\boldsymbol\nu,\boldsymbol\xi}({\bf x, y}) 
     + r_{\boldsymbol\nu,\boldsymbol\xi}({\bf x}) \delta_{\bf x}({\bf y}).
\end{equation*} 
where $r_{\boldsymbol\nu,\boldsymbol\xi}({\bf x})$ is the probability of remaining at $\bf x$ after the first step, which is defined as
\[
r_{\boldsymbol\nu,\boldsymbol\xi}({\bf x}) = 1 - \int_\mathcal{X}  q_{\boldsymbol\nu}\left({\bf x} ,{\bf y}\right) \, \alpha_{\boldsymbol\nu,\boldsymbol\xi}({\bf x, z}) \,\dd{\bf z}.
\]
For each fixed ${\bf y} \in \mathcal{X}$, it follows from conditions \eqref{cond:c} and (\ref{cond:g}) that the map $({\bf x},{\boldsymbol\nu}, {\boldsymbol\xi})\mapsto q_{\boldsymbol\nu}({\bf x} ,{\bf y})\alpha_{{\boldsymbol\nu},{\boldsymbol\xi}}({\bf x}, {\bf y})$ is Lipschitz continuous, as $q_{\boldsymbol\nu}\left({\bf x} ,{\bf y}\right)\alpha_{\boldsymbol\nu,\boldsymbol\xi}({\bf x, y})$ takes the form 
\[
q_{\boldsymbol\nu}\left({\bf x} ,{\bf y}\right)\,\alpha_{\boldsymbol\nu,\boldsymbol\xi}({\bf x, y}) 
= q_{\boldsymbol\nu}\left({\bf x} ,{\bf y}\right)\,\min\left\{ 1, \frac{\pi^*_{{\boldsymbol\xi},{\bf x}}({\bf y})\, q_{\boldsymbol\nu}\left({\bf y} ,{\bf x}\right)}
                     {\pi^*_{{\boldsymbol\xi},{\bf x}}({\bf x})\,q_{\boldsymbol\nu}\left({\bf x} ,{\bf y}\right)} \right\}
= \min\left\{ q_{\boldsymbol\nu}\left({\bf x} ,{\bf y}\right)\,, \frac{\pi^*_{{\boldsymbol\xi},{\bf x}}({\bf y})}{\pi^*_{{\boldsymbol\xi},{\bf x}}({\bf x})} \, q_{\boldsymbol\nu}\left({\bf y} ,{\bf x}\right) \right\}.
\]
We also have the map $({\bf x},{\boldsymbol\nu}, {\boldsymbol\xi})\mapsto r_{{\boldsymbol\nu},{\boldsymbol\xi}}({\bf x})$ is Lipschitz continuous by the bounded convergence theorem.
%

Thus, for each fixed ${\bf y} \neq \bf x$, the second step acceptance probability of ADA
\[
\beta_{\boldsymbol\nu,\boldsymbol\xi}({\bf x, y}) = \min\left\{1, \frac{\pi({\bf y}) \, q_{\boldsymbol\nu,\boldsymbol\xi}^*\left({\bf y} ,{\bf x}\right)}{\pi({\bf x}) \, q_{\boldsymbol\nu,\boldsymbol\xi}^*\left({\bf x} ,{\bf y}\right)}   
      \right\} 
= \min\left\{1, \frac{\pi({\bf y}) \, q_{\boldsymbol\nu}\left({\bf y} ,{\bf x}\right)\,\alpha_{\boldsymbol\nu,\boldsymbol\xi}({\bf y, x}) }{\pi({\bf x}) \, q_{\boldsymbol\nu}\left({\bf x} ,{\bf y}\right)\,\alpha_{\boldsymbol\nu,\boldsymbol\xi}({\bf x, y}) }   
      \right\} ,
\]
is jointly continuous in ${\bf x}$, $\boldsymbol\nu$ and $\boldsymbol\xi$.
We also have the probability of remaining at $\bf x$ after both steps~\ref{step:adamh2} and \ref{step:adamh3} of ADA, which has the form 
\begin{equation*}
\rho_{{\boldsymbol\nu},{\boldsymbol\xi}}({\bf x}) = 1 - \int_\mathcal{X}  q_{\boldsymbol\nu}({\bf x} ,{\bf z})\,\alpha_{{\boldsymbol\nu},{\boldsymbol\xi}}({\bf x}, {\bf z})\, \beta_{{\boldsymbol\nu},{\boldsymbol\xi}}({\bf x}, {\bf z}) \, \dd{\bf z},
\end{equation*}
which is jointly continuous in ${\bf x}$, $\boldsymbol\nu$ and $\boldsymbol\xi$ by the bounded convergence theorem.

Denoting $k_{\boldsymbol\nu,\boldsymbol\xi}({\bf x, y}) = q_{\boldsymbol\nu}({\bf x, y}) \, \alpha_{\boldsymbol\nu,\boldsymbol\xi}({\bf x, y})\, \beta_{\boldsymbol\nu,\boldsymbol\xi}({\bf x, y}) $, we can decompose the transition kernel of ADA as
\[
K_{\boldsymbol\nu,\boldsymbol\xi}({\bf x, y}) = k_{\boldsymbol\nu,\boldsymbol\xi}({\bf x, y}) +  \rho_{{\boldsymbol\nu},{\boldsymbol\xi}}({\bf x}) \delta_{\bf x}({\bf y}).
\]
Note that the transition function $k_{\boldsymbol\nu,\boldsymbol\xi}({\bf x, y})$ is jointly continuous in ${\bf x}$, $\boldsymbol\nu$ and $\boldsymbol\xi$ for each fixed $\bf y \in \mathcal{X}$ following the above derivations. 
Since the Dirac delta $\delta_{\bf x}({\bf y})$ is a point mass, $\delta_{\bf x}({\bf y})$ and the Lebesgue measure are orthogonal measures.
This way, iterating the transition kernel, we have the $n$-step transition kernel
\[
K^n_{\boldsymbol\nu,\boldsymbol\xi}({\bf x, y}) = k^n_{\boldsymbol\nu,\boldsymbol\xi}({\bf x, y}) +  \rho_{{\boldsymbol\nu},{\boldsymbol\xi}}({\bf x})^n \delta_{\bf x}({\bf y}),
\]
in which the $n$-step transition function $k^n_{\boldsymbol\nu,\boldsymbol\xi}({\bf x, y})$ is also joint continuous in ${\bf x}$, $\boldsymbol\nu$ and $\boldsymbol\xi$.
Furthermore, we have
\[
\| K_{\boldsymbol\nu,\boldsymbol\xi}^n({\bf x, \cdot}) - \pi(\cdot) \|_{TV} = \rho_{{\boldsymbol\nu},{\boldsymbol\xi}}({\bf x})^n + \frac12 \int_\mathcal{X}  \big| k^n_{\boldsymbol\nu,\boldsymbol\xi}({\bf x, y}) - \pi({\bf y}) \big| d {\bf y},
\]
following the property of total variation distance that $\| \lambda_1(\cdot) - \lambda_2(\cdot) \|_{TV} = \frac12 \int_\mathcal{X}  |\lambda_1({\bf x}) - \lambda_2({\bf x})| d{\bf x}$. 
This quantity is again joint continuous in ${\bf x}$, $\boldsymbol\nu$ and $\boldsymbol\xi$ by the bounded convergence theorem.
In addition, for each fixed pair of $\boldsymbol\nu$ and $\boldsymbol\xi$, $\lim_{n \rightarrow \infty}\| K_{\boldsymbol\nu,\boldsymbol\xi}^n({\bf x, \cdot}) - \pi(\cdot) \|_{TV}$ uniformly converges to zero in ${\bf x}$, $\boldsymbol\nu$ and $\boldsymbol\xi$ by ergodicity and compactness.
Therefore, the simultaneous uniform ergodicity condition holds given uniformly convergence and continuity.
\end{proof}

\begin{theorem}
\label{theo3}
Suppose an ADA algorithm with the target density $\pi(\cdot)$ has a family of transition kernels $\{ K_{\boldsymbol\nu,\boldsymbol\xi}({\bf x, \cdot})\}_{{\boldsymbol\nu}\in\mathcal{V}, {\boldsymbol\xi}\in\mathcal{E}}$.
Suppose further Conditions 1--5 of Theorem \ref{theo2} hold. 
The transition kernel satisfies the diminishing adaptation condition given the following conditions:
\begin{enumerate}
  \item The proposal satisfies diminishing adaptation, that is, \label{cond:d}
  \[
  \lim_{n\rightarrow\infty}\sup_{\bf x}\|q_{\boldsymbol\nu_{n+1}}({\bf x},\cdot)-q_{\boldsymbol\nu_{n}}({\bf x},\cdot)\|_{TV} = 0 \textrm{\quad in\;probability}, 
  \]   
  \item The approximation adaptation index satisfies diminishing adaptation, that is, $\lim_{n\rightarrow\infty} \| \boldsymbol\Xi_{n+1}-\boldsymbol\Xi_{n} \| = 0$ in probability.  \label{cond:h}
\end{enumerate}
\end{theorem}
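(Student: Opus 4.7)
The plan is to decompose each transition kernel into its density component and point-mass rejection component, observe that the total variation distance at a fixed ${\bf x}$ reduces to an $L^{1}$ difference of densities, and then use the triangle inequality to separate the proposal-adaptation effect from the approximation-adaptation effect. Each of the two resulting pieces is then bounded using the Lipschitz continuity results already established in the proof of Theorem~\ref{theo2} together with the two diminishing-adaptation hypotheses~\eqref{cond:d} and~\eqref{cond:h}.

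First I would write the ADA transition kernel in the split form $K_{\boldsymbol\nu,\boldsymbol\xi}({\bf x},{\bf y}) = k_{\boldsymbol\nu,\boldsymbol\xi}({\bf x},{\bf y}) + \rho_{\boldsymbol\nu,\boldsymbol\xi}({\bf x})\delta_{\bf x}({\bf y})$ already used in the proof of Theorem~\ref{theo2}, with $k_{\boldsymbol\nu,\boldsymbol\xi}({\bf x},{\bf y}) = q_{\boldsymbol\nu}({\bf x},{\bf y})\alpha_{\boldsymbol\nu,\boldsymbol\xi}({\bf x},{\bf y})\beta_{\boldsymbol\nu,\boldsymbol\xi}({\bf x},{\bf y})$. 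Because the point masses of the two kernels sit at the same location ${\bf x}$, a short Hahn-decomposition argument---exploiting that the signed difference of two probability measures has zero total mass---gives $\|K_{\boldsymbol\nu_{n+1},\boldsymbol\xi_{n+1}}({\bf x},\cdot) - K_{\boldsymbol\nu_n,\boldsymbol\xi_n}({\bf x},\cdot)\|_{TV} \leq \int_{\mathcal{X}} |k_{\boldsymbol\nu_{n+1},\boldsymbol\xi_{n+1}}({\bf x},{\bf y}) - k_{\boldsymbol\nu_n,\boldsymbol\xi_n}({\bf x},{\bf y})|\,\dd{\bf y}$. It therefore suffices to show that the right-hand side, taken to its supremum over ${\bf x}\in\mathcal{X}$, vanishes in probability.

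Next I would apply the triangle inequality by inserting the intermediate kernel indexed by $(\boldsymbol\nu_n,\boldsymbol\xi_{n+1})$, producing one term in which only $\boldsymbol\nu$ changes and one in which only $\boldsymbol\xi$ changes. For the proposal-only term, using the identity $q_{\boldsymbol\nu}({\bf x},{\bf y})\alpha_{\boldsymbol\nu,\boldsymbol\xi}({\bf x},{\bf y}) = \min\{q_{\boldsymbol\nu}({\bf x},{\bf y}),\, (\pi^{*}_{\boldsymbol\xi,{\bf x}}({\bf y})/\pi^{*}_{\boldsymbol\xi,{\bf x}}({\bf x}))\,q_{\boldsymbol\nu}({\bf y},{\bf x})\}$ from the proof of Theorem~\ref{theo2}, the integrand is bounded by a uniform constant times $|q_{\boldsymbol\nu_{n+1}}({\bf x},{\bf y}) - q_{\boldsymbol\nu_n}({\bf x},{\bf y})| + |q_{\boldsymbol\nu_{n+1}}({\bf y},{\bf x}) - q_{\boldsymbol\nu_n}({\bf y},{\bf x})|$, with the bounded factor $\beta_{\boldsymbol\nu,\boldsymbol\xi}\in[0,1]$ causing no loss. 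Integrating in ${\bf y}$ and taking the supremum over ${\bf x}$ yields a bound controlled by the supremal proposal total-variation distance, which tends to zero in probability by hypothesis~\eqref{cond:d}. For the approximation-only term, Condition~\eqref{cond:g} of Theorem~\ref{theo2} guarantees that $\log\pi^{*}_{\boldsymbol\xi,{\bf x}}({\bf y})$ is Lipschitz in $\boldsymbol\xi$ uniformly on the compact product space, so the ratio $\pi^{*}_{\boldsymbol\xi,{\bf x}}({\bf y})/\pi^{*}_{\boldsymbol\xi,{\bf x}}({\bf x})$---and hence $\alpha_{\boldsymbol\nu,\boldsymbol\xi}$, $\beta_{\boldsymbol\nu,\boldsymbol\xi}$, and $k_{\boldsymbol\nu,\boldsymbol\xi}$---is uniformly Lipschitz in $\boldsymbol\xi$. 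Combined with hypothesis~\eqref{cond:h}, which supplies $\|\boldsymbol\Xi_{n+1}-\boldsymbol\Xi_n\|\to 0$ in probability, this second term also tends to zero in probability.

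The main obstacle is that the second-step acceptance $\beta_{\boldsymbol\nu,\boldsymbol\xi}$ involves the effective proposal $q^{*}_{\boldsymbol\nu,\boldsymbol\xi}$, which depends jointly on $\boldsymbol\nu$ through $q_{\boldsymbol\nu}$, on $\boldsymbol\xi$ through $\alpha_{\boldsymbol\nu,\boldsymbol\xi}$, and in turn on the remaining-probability integral $r_{\boldsymbol\nu,\boldsymbol\xi}({\bf x})$. Propagating the Lipschitz control from $q_{\boldsymbol\nu}$ and $\pi^{*}_{\boldsymbol\xi,{\bf x}}$ through this nested composition requires one further application of bounded convergence on the compact space $\mathcal{X}$, which is exactly what Conditions~\eqref{cond:a}--\eqref{cond:g} of Theorem~\ref{theo2} are designed to enable. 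Assembling the two pieces under the triangle inequality and taking the supremum over ${\bf x}\in\mathcal{X}$ then delivers the diminishing-adaptation conclusion.
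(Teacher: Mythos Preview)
Your overall skeleton---split $K_{\boldsymbol\nu,\boldsymbol\xi}$ into density plus point mass, reduce the total-variation distance to an $L^1$ integral of $|k_{\boldsymbol\nu_{n+1},\boldsymbol\xi_{n+1}}-k_{\boldsymbol\nu_n,\boldsymbol\xi_n}|$, then separate the $\boldsymbol\nu$- and $\boldsymbol\xi$-adaptation via a triangle inequality through $(\boldsymbol\nu_n,\boldsymbol\xi_{n+1})$---matches the paper's proof. The gap is in how you handle $\beta_{\boldsymbol\nu,\boldsymbol\xi}$.

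Your claim that ``the bounded factor $\beta_{\boldsymbol\nu,\boldsymbol\xi}\in[0,1]$ caus[es] no loss'' is not correct: in the proposal-only term you are comparing $q_{\boldsymbol\nu_{n+1}}\alpha_{\boldsymbol\nu_{n+1},\boldsymbol\xi}\beta_{\boldsymbol\nu_{n+1},\boldsymbol\xi}$ with $q_{\boldsymbol\nu_n}\alpha_{\boldsymbol\nu_n,\boldsymbol\xi}\beta_{\boldsymbol\nu_n,\boldsymbol\xi}$, and the two $\beta$'s are different because $\beta$ depends on $\boldsymbol\nu$ through $q^*$. So $\beta$ is not a common factor that can simply be bounded by $1$. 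Your fallback plan of propagating Lipschitz control through $\beta$ runs into the difficulty that $\beta$ is a ratio with denominator $q^*_{\boldsymbol\nu,\boldsymbol\xi}({\bf x},{\bf y})=q_{\boldsymbol\nu}({\bf x},{\bf y})\alpha_{\boldsymbol\nu,\boldsymbol\xi}({\bf x},{\bf y})$, which is not bounded away from zero on the compact space, so uniform Lipschitz control on $\beta$ itself is not available.

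The paper sidesteps this entirely by applying the $\min$-identity one level higher than you do. You apply it to $q\alpha$; the paper applies it to $p\beta$ with $p=q\alpha$. Since for ${\bf x}\neq{\bf y}$ one has $q^*_{\boldsymbol\nu,\boldsymbol\xi}({\bf x},{\bf y})=p_{\boldsymbol\nu,\boldsymbol\xi}({\bf x},{\bf y})$, the full density satisfies
\[
k_{\boldsymbol\nu,\boldsymbol\xi}({\bf x},{\bf y})=p_{\boldsymbol\nu,\boldsymbol\xi}({\bf x},{\bf y})\,\beta_{\boldsymbol\nu,\boldsymbol\xi}({\bf x},{\bf y})=\min\Big\{p_{\boldsymbol\nu,\boldsymbol\xi}({\bf x},{\bf y}),\ \tfrac{\pi({\bf y})}{\pi({\bf x})}\,p_{\boldsymbol\nu,\boldsymbol\xi}({\bf y},{\bf x})\Big\},
\]
and then $|\min\{a,b\}-\min\{c,d\}|\le |a-c|+|b-d|$ together with the uniform bound on $\pi({\bf y})/\pi({\bf x})$ (from compactness and continuity) gives $\|K_{\boldsymbol\nu_{n+1},\boldsymbol\xi_{n+1}}({\bf x},\cdot)-K_{\boldsymbol\nu_n,\boldsymbol\xi_n}({\bf x},\cdot)\|_{TV}\le C\,\|p_{\boldsymbol\nu_{n+1},\boldsymbol\xi_{n+1}}({\bf x},\cdot)-p_{\boldsymbol\nu_n,\boldsymbol\xi_n}({\bf x},\cdot)\|_{TV}$. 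The factor $\beta$ never appears again, and from here your triangle-inequality split and the Lipschitz argument in $\boldsymbol\xi$ go through exactly as you describe, because $p=q\alpha$ involves only $q_{\boldsymbol\nu}$ and $\pi^*_{\boldsymbol\xi,{\bf x}}$.
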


\newcommand{\transada}[1]{p_{\boldsymbol\nu_{#1},\boldsymbol\xi_{#1}}({\bf x, y})\, \beta_{\boldsymbol\nu_{#1},\boldsymbol\xi_{#1}}({\bf x, y})}

\begin{proof}
The first half of our proof uses the result of Lemma 4.21 in~\cite{LRR2013}. 
For simplicity, we define the transition probability of ADA as $p_{\boldsymbol\nu,\boldsymbol\xi}({\bf x, y}) = q_{\boldsymbol\nu}({\bf x, y}) \, \alpha_{\boldsymbol\nu,\boldsymbol\xi}({\bf x, y})$, which is the probability propose $\bf y$ from $\bf x$ and accept $\bf y$ in the first step of ADA.
This way, the transition kernel of the ADA has the form
\begin{equation*}
K_{\boldsymbol\nu,\boldsymbol\xi}({\bf x, y}) = p_{\boldsymbol\nu,\boldsymbol\xi}({\bf x, y})\, \beta_{\boldsymbol\nu,\boldsymbol\xi}({\bf x, y}) + \rho_{{\boldsymbol\nu},{\boldsymbol\xi}}({\bf x}) \delta_{\bf x}({\bf y}),
\textrm{\; where \;}
\rho_{{\boldsymbol\nu},{\boldsymbol\xi}}({\bf x}) = 1 - \int_\mathcal{X}  p_{{\boldsymbol\nu},{\boldsymbol\xi}}({\bf x}, {\bf z})\, \beta_{{\boldsymbol\nu},{\boldsymbol\xi}}({\bf x}, {\bf z}) \, \dd{\bf z}.
\end{equation*}
Then, for any $\bf x \in \mathcal{X}$ and $\bf A \in \mathcal{B}(\mathcal{X})$, transition kernels in two consecutive iterations $n$ and $n+1$ satisfy
\begin{eqnarray}
\big| K_{\boldsymbol\nu_{n+1},\boldsymbol\xi_{n+1}}({\bf x, A}) - K_{\boldsymbol\nu_{n},\boldsymbol\xi_{n}}({\bf x, A}) \big| \hspace{-10em} && \nonumber \\
&=&  \left| \int_{\bf A} \left[ \transada{n+1} - \transada{n} \right] d {\bf y} \right. \nonumber \\
&& \left. + \mathbbm{1}_{\bf x \in A} \left[ \rho_{{\boldsymbol\nu}_{n+1},{\boldsymbol\xi}_{n+1}}({\bf x}) - \rho_{{\boldsymbol\nu}_{n},{\boldsymbol\xi}_{n}}({\bf x}) \right] \right| \nonumber \\
&\leq& \int_{\bf A}  \left| \transada{n+1} - \transada{n} \right| d {\bf y}  \nonumber \\
%
%
&& + \mathbbm{1}_{\bf x \in A}  \int_\mathcal{X} \left| \transada{n+1} - \transada{n} \right| d{\bf y} ,
\label{eq:theo31}
\end{eqnarray}
where $\mathbbm{1}_{\bf x \in A}$ is the indicator function. For $\bf x \neq y$, we have
\[
p_{\boldsymbol\nu,\boldsymbol\xi}({\bf x, y}) \,\beta_{\boldsymbol\nu,\boldsymbol\xi}({\bf x, y}) = \min\left\{p_{\boldsymbol\nu,\boldsymbol\xi}({\bf x, y}), \frac{\pi({\bf y}) }{\pi({\bf x}) }\, p_{\boldsymbol\nu,\boldsymbol\xi}({\bf y, x}) \right\} .
\]
Given the identity $|\min\{a, b\} - \min\{c, d\} | \leq |a - c| + |b - d|$, we have
\begin{equation}
\left| \transada{n+1} - \transada{n} \right| \leq \left|p_{\boldsymbol\nu_{n+1},\boldsymbol\xi_{n+1}}({\bf x, y}) - p_{\boldsymbol\nu_{n},\boldsymbol\xi_{n}}({\bf x, y})\right| \left(1 + \frac{\pi({\bf y}) }{\pi({\bf x}) }\right).
\label{eq:theo32}
\end{equation}
The compactness of parameter space $\mathcal{X}$ and continuity of the target $\pi(\cdot)$ imply that 
\begin{equation}
\left(1 + \frac{\pi({\bf y}) }{\pi({\bf x}) }\right) < C_1, \; \forall {\bf x, y} \in \mathcal{X},
\label{eq:theo33}
\end{equation}
for some constant $C_1 < \infty$.
Recall the property $\| \lambda_1(\cdot) - \lambda_2(\cdot) \|_{TV} = \frac12 \int_\mathcal{X} |\lambda_1({\bf y}) - \lambda_2({\bf y})| d{\bf y}$, Eqn. \ref{eq:theo31}--\ref{eq:theo33} imply that
\begin{equation*}
\big| K_{\boldsymbol\nu_{n+1},\boldsymbol\xi_{n+1}}({\bf x, A}) - K_{\boldsymbol\nu_{n},\boldsymbol\xi_{n}}({\bf x, A}) \big| \leq C_2\,\left\|p_{\boldsymbol\nu_{n+1},\boldsymbol\xi_{n+1}}({\bf x, \cdot}) - p_{\boldsymbol\nu_{n},\boldsymbol\xi_{n}}({\bf x, \cdot})\right\|_{TV},
\end{equation*}
for some constant $C_2 < \infty$. Thus, we have
\begin{equation*}
\left\|K_{\boldsymbol\nu_{n+1},\boldsymbol\xi_{n+1}}({\bf x, \cdot}) - K_{\boldsymbol\nu_{n},\boldsymbol\xi_{n}}({\bf x, \cdot})\right\|_{TV} \leq C_2\,\left\|p_{\boldsymbol\nu_{n+1},\boldsymbol\xi_{n+1}}({\bf x, \cdot}) - p_{\boldsymbol\nu_{n},\boldsymbol\xi_{n}}({\bf x, \cdot})\right\|_{TV}.
\end{equation*}

Since we can bound the total variation distance between transition kernels by the total variation distance between transition densities in the form of $p_{\boldsymbol\nu,\boldsymbol\xi}({\bf x, y}) = q_{\boldsymbol\nu}({\bf x, y}) \, \alpha_{\boldsymbol\nu,\boldsymbol\xi}({\bf x, y})$, diminishing adaptation of the overall transition kernel follows from diminishing adaptation of $q_{\boldsymbol\nu}({\bf x, y}) \, \alpha_{\boldsymbol\nu,\boldsymbol\xi}({\bf x, y})$.

Given the triangle inequality 
\begin{equation*}
\left\|p_{\boldsymbol\nu_{n+1},\boldsymbol\xi_{n+1}}({\bf x, \cdot}) - p_{\boldsymbol\nu_{n},\boldsymbol\xi_{n}}({\bf x, \cdot})\right\|_{TV} 
\leq 
\left\|p_{\boldsymbol\nu_{n+1},\boldsymbol\xi_{n+1}}({\bf x, \cdot}) - p_{\boldsymbol\nu_{n},\boldsymbol\xi_{n+1}}({\bf x, \cdot})\right\|_{TV} + \left\|p_{\boldsymbol\nu_{n},\boldsymbol\xi_{n+1}}({\bf x, \cdot}) - p_{\boldsymbol\nu_{n},\boldsymbol\xi_{n}}({\bf x, \cdot})\right\|_{TV},
\end{equation*}
we can treat adaptation indices ${\boldsymbol\nu}$ and ${\boldsymbol\xi}$ in separate steps. 
Using a similar argument in the first part of this proof, we can show that for any fixed approximation adaptation index $\boldsymbol\xi$, diminishing adaptation of $p_{{\boldsymbol\nu},{\boldsymbol\xi}}({\bf x},{\bf y})$ with respect to proposal adaptation in $\boldsymbol\nu$ follows directly from the above condition \ref{cond:d}
In the rest of this proof, we establish the diminishing adaptation of $p_{{\boldsymbol\nu},{\boldsymbol\xi}}({\bf x},{\bf y})$ with respect to approximation adaptation in $\boldsymbol\xi$.
For any fixed proposal adaptation index $\boldsymbol\nu$, we have the following inequality 
\begin{eqnarray*}
 \left\|p_{{\boldsymbol\nu},{\boldsymbol\xi}_{n+1}}({\bf x},\cdot) - p_{{\boldsymbol\nu},{\boldsymbol\xi}_n}({\bf x},\cdot)\right\|_{TV} & = & \frac12 \int_\mathcal{X} q_{\boldsymbol\nu}({\bf x, y}) \, \left| \alpha_{\boldsymbol\nu,\boldsymbol\xi_{n+1}}({\bf x, y}) - \alpha_{\boldsymbol\nu,\boldsymbol\xi_{n}}({\bf x, y}) \right| d{\bf y} \nonumber \\
 & = & \frac12 \int_\mathcal{X} \left| \min\left\{ q_{\boldsymbol\nu}({\bf x, y}), \frac{\pi^*_{{\boldsymbol\xi}_{n+1},{\bf x}}({\bf y})}{\pi^*_{{\boldsymbol\xi}_{n+1},{\bf x}}({\bf x})} q_{\boldsymbol\nu}\left({\bf y} ,{\bf x}\right) \right\} \right. \nonumber\\
 && \left.- \min\left\{ q_{\boldsymbol\nu}({\bf x, y}), \frac{\pi^*_{{\boldsymbol\xi}_{n},{\bf x}}({\bf y})}{\pi^*_{{\boldsymbol\xi}_{n},{\bf x}}({\bf x})} q_{\boldsymbol\nu}\left({\bf y} ,{\bf x}\right) \right\} \right| d{\bf y} \nonumber\\
 & \leq & \frac12 \int_\mathcal{X} \left| \frac{\pi^*_{{\boldsymbol\xi}_{n+1},{\bf x}}({\bf y})}{\pi^*_{{\boldsymbol\xi}_{n+1},{\bf x}}({\bf x})} - \frac{\pi^*_{{\boldsymbol\xi}_{n},{\bf x}}({\bf y})}{\pi^*_{{\boldsymbol\xi}_{n},{\bf x}}({\bf x})} \right| q_{\boldsymbol\nu}\left({\bf y} ,{\bf x}\right) d {\bf y}.
\end{eqnarray*}
Given the compactness of $\mathcal{X} \times \mathcal{E}$ and Lipschitz continuity of $\log \pi^*_{\boldsymbol\xi,\bf x}({\bf y})$, it follows that the RHS uniformly converges to zero as $\|{\boldsymbol\xi}_{n+1}-{\boldsymbol\xi}_{n}\|\rightarrow 0$.
Thus, diminishing adaptation holds.
\end{proof}


Conditions required in Theorems \ref{theo2} and \ref{theo3} are not restrictive for many practical applications. 
Condition \eqref{cond:a} of Theorem \ref{theo2}, that parameter space and the adaptation spaces are compact, is often a consequence of physical bounds on the parameters and bounded model outputs. In practical computation, one could argue that this assumption always holds as computers are finite dimensional, though one does not want to explore the full range of numerical representations if the algorithm is to be efficient! 
Conditions \eqref{cond:b} and \eqref{cond:c} of Theorem \ref{theo2} and Condition \eqref{cond:d} of Theorem \ref{theo3} are conditions on on the proposal distribution, and depend on the choice of proposal and adaptation that is used. Thus, these conditions can be satisfied by making suitable choices. 
Conditions \ref{cond:f} and \ref{cond:g} of Theorem \ref{theo2}, of Lipschitz continuity, is satisfied by the forward model
and its reduced model in most inverse problems; Indeed, the more ill-posed is the inverse problem, the more well-posed is the forward model and the higher the order of continuity satisfied by the forward model. Finite-dimensional reduced models are Lipschitz continuous because stiffness matrices are not singular when the reduced model is well posed.

Then, we can use Theorems~\ref{theo2} and \ref{theo3} to establish ergodicity of ADA using the approximation schemes in Section~\ref{sec:approx}.

\begin{corollary}
\label{coro1}
Suppose that the forward map $F(\cdot)$ and its reduced model $F^*(\cdot)$ are continuous functions, then the ADA Algorithm~\ref{al:adamh} with either of the adaptive proposals in Section~\ref{sec:adapt}, and using any of Approximation~\ref{appr:coarse} to \ref{appr:aemsd} in Section~\ref{sec:approx} is ergodic for $\pi(\cdot)$. 
\end{corollary}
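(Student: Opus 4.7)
The plan is to verify, for each of the five approximation schemes combined with either of the adaptive proposals AM or GCAM, the hypotheses of Theorems~\ref{theo2} and~\ref{theo3}. I would organize the argument so that the generic conditions (on the parameter space, the target, and the proposal) are dispatched once, leaving only a short case analysis on the adaptation of the AEM.

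First I would settle the generic conditions. Compactness of $\mathcal{X}$ (Condition~\ref{cond:a}) follows from physical bounds on the parameters, as already noted after Theorem~\ref{theo3}. Because $F$ is continuous on a compact set, $F(\mathcal{X})$ is bounded, so the empirical mean and covariance recursions in Eqns~\ref{eq:updatemu1}--\ref{eq:updatesigma1} and~\ref{eq:appr41} produce iterates lying in a compact adaptation set $\mathcal{E}$; similarly the AM/GCAM empirical covariance estimate lies in a compact set $\mathcal{V}$. Condition~\ref{cond:f}, Lipschitz continuity of $\pi(\cdot)$, follows from continuity of $F$ on the compact $\mathcal{X}$ together with the Gaussian form in Eqn.~\ref{eq:post} and continuity of the prior. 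For the proposals, both AM and GCAM are Gaussian random walks whose covariance is regularized by an additive $\beta I_d$ term, hence strictly positive, uniformly bounded, and jointly Lipschitz in $(\mathbf{x},\boldsymbol\nu)$ (Condition~\ref{cond:c}); strict positivity on the compact $\mathcal{X}$ gives a minorization for $K_{\boldsymbol\nu}$ targeting $\pi$, yielding uniform ergodicity (Condition~\ref{cond:b}). Diminishing adaptation of the proposal (Condition~\ref{cond:d}) is already established for AM in~\cite{adaptive_roberts2007} and extends directly to GCAM because the group covariance and scale update at rate $O(1/n)$.

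Next I would do the case analysis for Condition~\ref{cond:g} of Theorem~\ref{theo2} and Condition~\ref{cond:h} of Theorem~\ref{theo3}. For Approximations~\ref{appr:coarse},~\ref{appr:aempr}, and~\ref{appr:local} there is no approximation adaptation index, so Condition~\ref{cond:h} is vacuous and Condition~\ref{cond:g} reduces to joint Lipschitz continuity of $\log \pi^*_{\mathbf{x}}(\mathbf{y})$ in $(\mathbf{x},\mathbf{y})$, which follows from continuity of $F$ and $F^*$, the fixed positive-definite covariance, and continuity of the prior. For Approximations~\ref{appr:aempost} and~\ref{appr:aemsd}, $\boldsymbol\xi=(\bar{\boldsymbol\mu}_{B,n},\bar{\boldsymbol\Sigma}_{B,n})$ (resp.\ $\hat{\boldsymbol\Sigma}_{B,n}$). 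The recursions~\ref{eq:updatemu1}--\ref{eq:updatesigma1} and~\ref{eq:appr41} have increments of order $1/n$ times a bounded quantity, giving $\|\boldsymbol\Xi_{n+1}-\boldsymbol\Xi_n\|\to 0$ in probability, hence Condition~\ref{cond:h}. For Condition~\ref{cond:g}, the effective precision matrix is $(\boldsymbol\Sigma_B+\boldsymbol\Sigma_{\mathbf{e}})^{-1}$; since $\boldsymbol\Sigma_B\succeq 0$ by construction of an empirical covariance and $\boldsymbol\Sigma_{\mathbf{e}}\succ 0$, the sum is uniformly bounded below in the Loewner order, so its inverse and log-determinant are Lipschitz on the compact $\mathcal{E}$, and the Mahalanobis form in Eqn.~\ref{eq:a_enhan_post} (resp.~\ref{eq:a_enhan_lp}) is jointly Lipschitz in $(\mathbf{x},\mathbf{y},\boldsymbol\xi)$.

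The main obstacle is precisely the Lipschitz control of $\log \pi^*_{\boldsymbol\xi,\mathbf{x}}(\mathbf{y})$ through the inverse and determinant of the adaptive covariance $\boldsymbol\Sigma_B+\boldsymbol\Sigma_{\mathbf{e}}$; this is what forces me to be careful about both the uniform positive-definite lower bound (supplied by $\boldsymbol\Sigma_{\mathbf{e}}$) and the uniform upper bound on $\boldsymbol\Sigma_B$ (supplied by compactness of $\mathcal{X}$ via the bounded model discrepancy $B(\mathbf{x})$). Once those two bounds are in hand, the verification is mechanical and Corollary~\ref{coro1} follows by applying Theorem~\ref{theo1} to each of the five approximation schemes coupled with either adaptive proposal.
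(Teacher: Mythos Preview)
Your proposal is correct and follows essentially the same route as the paper's proof: establish compactness of $\mathcal{X}$, $\mathcal{V}$, $\mathcal{E}$ from physical bounds and continuity of $F,F^*$; obtain nonsingularity of the effective covariance from $\boldsymbol\Sigma_{\mathbf e}\succ 0$ plus $\boldsymbol\Sigma_B\succeq 0$; and get diminishing adaptation from the $O(1/n)$ increments in the empirical updates, with the proposal conditions delegated to existing results for AM/GCAM. Your write-up is somewhat more detailed than the paper's (you spell out the minorization for Condition~\ref{cond:b} and the Lipschitz control on the inverse and log-determinant), but the structure and the key steps coincide.
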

\begin{proof}
We treat Approximations~\ref{appr:aempost} and \ref{appr:aemsd}, since Approximations~\ref{appr:coarse}, \ref{appr:aempr}, and \ref{appr:local} are non-adaptive special cases.

Without loss of generality, we may assume that the parameter space $\mathcal{X}$ is compact, as we can always define bounds on the input parameter to the computer model. Since $F(\cdot)$ and $F^*(\cdot)$ are continuous, it follows that the model reduction error $B(\cdot)$ or $B_{{\bf x}}(\cdot)$ is compact. Thus, the space of possible $\boldsymbol\mu_{B,n}$  and $\boldsymbol\Sigma_{B,n}$ is compact, that is, $\mathcal{E}$ is compact. Compactness of $\mathcal{V}$ follows from the form of proposal adaptation in Algorithms~\ref{alg:am} and \ref{alg:gcam}, and existing results for such proposals, such as Corollary 6 of  \cite{adaptive_roberts2007}. This establishes Condition~\eqref{cond:a} of Theorem~\ref{theo2}. 

Since $F(\cdot)$ and $F^*(\cdot)$ are continuous, Condition~\eqref{cond:g} of Theorem~\ref{theo2}  follows from the form of Equations~\ref{eq:a_enhan_post} and \ref{eq:a_enhan_lp} when $\boldsymbol\Sigma_{B,n}+\boldsymbol\Sigma_{\bf e}$ is nonsingular. Since $\boldsymbol\Sigma_{\bf e}$ is positive definite and $\boldsymbol\Sigma_{B,n}$ is positive semi-definite, it follows that $\boldsymbol\Sigma_{B,n}+\boldsymbol\Sigma_{\bf e}$ is actually positive definite.

The AEM updating rules Eqns~\ref{eq:updatemu1}--\ref{eq:updatesigma1} and Eqn. \ref{eq:appr41} satisfy the diminishing adaptation condition (Condition~\eqref{cond:h} of Theorem~\ref{theo3}), because the empirical estimates change $O(1/n)$, in probability, at the $n$-th iteration. Similarly, the proposal diminishing adaptation condition (Condition~\eqref{cond:d} of Theorem~\ref{theo3}) follows from the form of adaptation in Algorithms~\ref{alg:am} and \ref{alg:gcam}.

The conditions in Theorems~\ref{theo2} and \ref{theo3} hold, and so the result follows. 
\end{proof}


\section{Applications in fitting geothermal reservoir models}
\label{sec:geo}

In this section we apply ADA to two calibration problems for geothermal reservoir models. First, a one dimensional radial symmetry model of the feedzone of a geothermal reservoir with synthetic data is presented. This example is small enough that extensive statistics can be computed to study relative efficiencies of algorithms. Then ADA is applied to sample a 3D model with measured data. We begin with a description of the governing equations  of the geothermal reservoir and its numerical simulator.

\subsection{Data simulation}
\label{sec:data_simu}
Consider a two phase geothermal reservoir (water and vapour) governed by the general mass balance and energy balance equations\cite{geo_book,geo_osullivan_1985}
\begin{equation} 
\frac{d}{dt} \int_{\Omega} M_\alpha \,dV = \int_{\partial \Omega} Q_\alpha \cdot {\bf n} \, d\Gamma + \int_{\Omega} q_\alpha \, dV ,
\quad \alpha\in\{\textrm{m},\textrm{e}\},
\label{eq:conserve}
\end{equation} 
where $\Omega$ is the control volume and $\partial \Omega$ is its boundary. The accumulation term $q_\alpha$ represents the mass ($q_\textrm{m}$) and heat ($q_\textrm{e}$) sources or sinks in $\Omega$, and $Q_\alpha$ denotes the mass ($Q_\textrm{m}$) or energy ($Q_\textrm{e}$) flux through $\partial \Omega$. The mass and energy within $\Omega$ are represented by $M_\textrm{m}$ and $M_\textrm{e}$.

A complex set of nonlinear partial differential equations including non-isothermal, multiphase Darcy's law are used to model $M_\alpha$ and $Q_\alpha$. 
%
%
Using a two-phase flow as the example, the mass and energy per unit volume can be modelled by
\begin{eqnarray}
\label{eq:mass}
  \displaystyle M_\textrm{m} & = & \displaystyle \phi \, \big[\rho_\textrm{l}
(1 - S_\textrm{v}) + \rho_\textrm{v} S_\textrm{v}\big] , \\
  \displaystyle M_\textrm{e} & = & \displaystyle (1 - \phi) \, \rho_\textrm{r}
c_\textrm{r} T + \phi \, \big[\rho_\textrm{l} u_\textrm{l}(1 - S_\textrm{v}) +
\rho_\textrm{v} u_\textrm{v} S_\textrm{v}\big] ,
\label{eq:energy}
\end{eqnarray}
where $\phi$ is the porosity of rock, $T$ is the temperature, and $S_{\rm v}$ represent the vapour saturation. 
We use the subscripts $l$, $v$ and $r$ represent the liquid phase, the vapour phase, and the rock, respectively. 
There are physical constants involved in the above equations: $c_{(\cdot)}$ is the specific heat, $\rho_{(\cdot)}$ denotes the density, and $u_{(\cdot)}$ is the specific internal energy.
%
%
The mass and energy fluxes are modelled by 
\begin{eqnarray}
\label{eq:qm}
\displaystyle Q_{\textrm{m}} & = & \displaystyle \sum_{\beta=\textrm{l,v}}
\displaystyle \frac{k k_{\textrm{r}\beta}}{\nu_{\beta}}(\triangledown p -
\rho_{\beta}\vec{g}), \\
  Q_\textrm{e} & = & \displaystyle \sum_{\beta=\textrm{l,v}} \displaystyle \frac{k
k_{\textrm{r}\beta}}{\nu_{\beta}}(\triangledown p -
\rho_{\beta}\vec{g})h_{\beta} - K\triangledown T ,
\label{eq:qe}
\end{eqnarray}
where $k$ is a diagonal second order permeability tensor in 3-dimensions. Physical constants $h_{(\cdot)}$ and $K$ denote specific enthalpy and the thermal conductivity in a saturated medium, respectively.
In the above equations, relative permeabilities $k_{\rm rl}$ and $k_{\rm rv}$---which are empirically derived functions---are introduced to account for the interference between liquid and vapour phases.
Here, we use the van Genuchten-Mualem model \cite{vangenuchten}:
\[
(k_\textrm{rl}, k_\textrm{rv}) = f_\textrm{vGM}(S_\textrm{v}; m,S_\textrm{rl}, S_\textrm{ls}),
\] 
which is a function of the saturation $S_\textrm{v}$. The van Genuchten-Mualem model also depends on explicitly provided parameters $m$, $S_\textrm{rl}$, and $S_\textrm{ls}$ to characterize the behaviour of relative permeabilities.

In the system of equations \ref{eq:conserve}--\ref{eq:qe}, the system states are the spatially distributed quantities $(p, T, S_{\rm v})$. We can make partial observation either on these quantities directly or some measurements that are typically  nonlinear function of these quantities. 
The system of equations contains unknown parameters including porosity, permeability, initial and boundary conditions, and parameters of the relative permeability model, are fit to data using sample-based inference.

We use the package TOUGH2 \cite{tough2} to numerically solve the system of equations \ref{eq:conserve}--\ref{eq:qe}, for a given set of parameters, using the finite volume method with first order accuracy in space.
TOUGH2 carries time integration using the implicit first order scheme, in which the implicit time integration is solved by the Newton--Krylov method. In TOUGH2, time stepping is automatically adjusted according to the number of Newton iterations in each time step. 
This way, numerical models based on grids with different resolutions can be naturally used to construct the forward model and its reduced model. We do not control the time step in the discretization. 

\subsection{Well discharge test analysis}
\label{sec:1d}

\subsubsection{Parameter and data}
\begin{figure}[ht]
\centering
\includegraphics[width=0.8\textwidth]{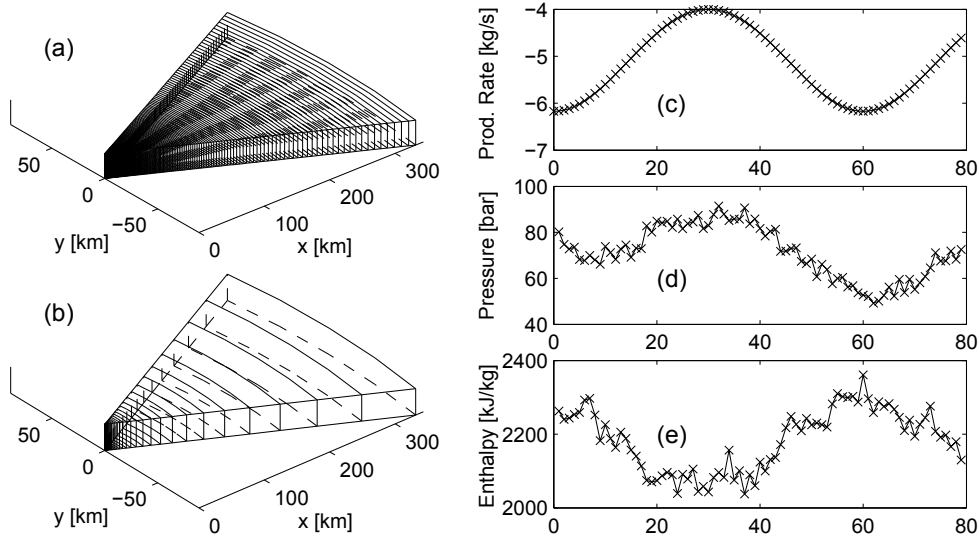}
\caption{Finite volume grids used for well discharge test analysis and data sets used for well discharge test. (a): the forward model (640 blocks), (b): the reduced model (40 blocks), (c): the production rate (kg/second), (d): the pressure (bar), and (e): the flowing enthalpy (kJ/kg).}
\label{figure:wellall} 
\end{figure}

Well discharge analysis is usually used to interpret the near-well properties of the reservoir from pressure and enthalpy data measured during a short period of field production. 
Based on the typical assumption that all flows into the well come through a single layer feedzone, we build an one-dimensional radially-symmetric forward model $F(\cdot)$ with 640 blocks as shown in Fig.~\ref{figure:wellall} (a).
A high resolution grid is used immediately outside the well, with cell thickness increasing exponentially outside this region. The reduced model $F^*(\cdot)$ is built by coarsening the grid of the forward model to a coarse grid with 40 blocks, see Figure \ref{figure:wellall} (b). 
Based on 1,000 simulations with different set of parameters on a DELL T3400 workstation, we estimate the CPU time for evaluating the forward model is 2.60 seconds. The computing time of the reduced model is 0.15 seconds, which is $5.8\%$ of that for the forward model.

\begin{table}[ht]
\caption{Prior constraints for parameters of the well test model, and parameter values for generating synthetic data.}
\label{table:prior}
\centering
   \begin{tabular}{llllllll}
   \toprule
   & $\phi$[-] & $\log_{10}(k) [\textrm{m}^2]$ & $p_\textrm{0}$ [bar] & $S_\textrm{v0}$ [-] & $m$ [-] & $S_\textrm{rl}$ [-] & $S_\textrm{ls}$ [-]  \\
   \midrule
   Lower bound & 0 & -Inf & 0.5 & 0 & 0.7 & 0  &  100 \\
   \midrule
   Upper bound & 0.3 & Inf & 1 & 0.5 & 1 & 0.3  & 150 \\
   \midrule
   True value& 0.12 & -14.82 & 120 & 0.1 & 0.65 & 0.25 & 0.91 \\
   \bottomrule
   \end{tabular}
\end{table}

The parameters of interest are the porosity, (horizontal) permeability, the parameters in the van Genuchten-Mualem relative permeability model, as well as the initial vapour saturation ($S_\textrm{v0}$) and initial pressure ($p_\textrm{0}$) that are used to represent the initial thermodynamic state of the two-phase system. These make up the seven unknowns required for data simulation:
\begin{equation*}
  {\bf x} = \left\{ {\phi}, \log_{10}(k), p_\textrm{0}, S_\textrm{v0},  m, S_\textrm{rl}, S_\textrm{ls}\right\}.
\end{equation*}
Note that the permeability $k$ is represented on a base 10 logarithmic scale. These parameters are assumed to be independent and follow non-informative prior distributions with bounds given in Table~\ref{table:prior}. Table \ref{table:prior} also gives ``true'' values of model parameters used to simulate the synthetic data. The model is simulated over 80 days with production rates varying smoothly from about 4 kg/second to about 6 kg/second (see Figure \ref{figure:wellall} (c)).

For well test experiments, the observations include the pressure and flowing enthalpy measured at $N$ discrete time points $t_1, t_2, \ldots, t_N$ up to day 30, where the flowing enthalpy $h_f$ is a nonlinear function of pressure and vapour saturation, i.e., $h_f( p, S_{\rm v} )$. This way, we can denote the observed data as
\[
\tilde{\bf d} = \big[\tilde{p}(t_1), \tilde{p}(t_2), \ldots, \tilde{p}(t_N), \tilde{h}_f(t_1),\tilde{h}_f(t_2),\ldots, \tilde{h}_f(t_N) \big]^\top.
\]
For each realization of the model parameter, we can simulate the forward model $F(\bf x)$ to generate observable model outputs corresponding to the observed data.

We assume the measurement noise follows i.i.d. zero mean Gaussian distribution with standard deviations $\sigma_p = 3\;\textrm{bar}$ for pressure and $\sigma_h = 30\;\textrm{kJ/kg}$ for the flowing enthalpy. The noise corrupted pressure and flowing enthalpy data are plotted in Figure \ref{figure:wellall} (d) and (e), respectively. This yields the posterior distribution
\begin{eqnarray*}
\pi_{\rm post}({\bf x} \, | \, \tilde{\bf d}) & \propto & \exp\left[-\frac{1}{2}  \left( F({\bf x})-\tilde{\bf d} \right)^\top  {\bf \Sigma}_\textrm{\bf e}^{-1} \left( F({\bf x})- \tilde{\bf d}\right)\right] \, \chi({\bf x}) 
\label{eq:well_post}
\end{eqnarray*} 
where $\chi({\bf x} )$ is the prior distribution which we take to be the indicator function that implements the bounds in Table \ref{table:prior} and 
\[
{\bf \Sigma}_\textrm{\bf e} = \left[\begin{array}{cc} \sigma_\textrm{h}^2 {\bf I}_n & 0 \\ 0 & \sigma_\textrm{p}^2 {\bf I}_n \end{array}\right]
\]
is the covariance of measurement noise.


To check the assumptions of Theorems \ref{theo2} and \ref{theo3}, we note that the compactness of the parameter space is satisfied as we impose bounds on parameters. 
Lipschitz continuity of the forward model and reduced model, and hence the continuity of the posterior and its approximations based on reduced model, is satisfied as the system of equations \ref{eq:conserve}--\ref{eq:qe} are continuous. 
Compactness of the approximation adaptation space $\mathcal{E}$ directly follows from the compactness of the parameter space and the continuity of the forward model and reduced model.

\subsubsection{MCMC sampling}

To benchmark various approximate posterior distributions, we first run GCAM with one group of all 7 parameters and targeting the exact posterior distribution. Based on several short runs (not reported here), we find that an acceptance rate of $13\%$ gives optimal efficiency for this problem. This configuration of GCAM is run for $10^5$ iterations (after discarding $5\times10^4$ burn-in steps) giving an IACT of the log-likelihood function estimated as $84.4$. 
Then we then run ADA using Approximation~\ref{appr:coarse}, Approximation~\ref{appr:aempr} (with the AEM calculated \emph{a priori}), Approximation~\ref{appr:aempost} (with the AEM calculated adaptively over the posterior), and Approximation~\ref{appr:aemsd} with the AEM calculated adaptively over the posterior. All cases used GCAM for the proposal with the target acceptance rate of $13\%$.
The acceptance rate in step~\ref{step:adamh3} of ADA, $\bar{\beta}$, and the IACT of the likelihood function are shown in Table~\ref{table:performance}. 
We note that when Approximation~\ref{appr:coarse} is used in ADA, the method is the equivalent method in~\cite{Efendiev}, except that we use an adaptive proposal here.

\begin{table}[ht]
\caption{Performance summary of various approximations for the well test model. The abbreviations used are: $\bar{\beta}$ -- the second step acceptance rate, and IACT -- IACT of the log-likelihood function.}
\label{table:performance}
\centering
\begin{tabular}{llllll}
\toprule
          & Approx. 1 & Approx. 2 (prior)  & Approx. 3 (posterior) & Approx. 5 & Standard MH \\
\midrule
$\bar{\beta}$   & $0.12$ & $0.31$ & $0.77$   & $0.93$   & $1$     \\
IACT            & -      & -      & $208$    & $153$    & $169$ \\
\bottomrule
\end{tabular}
\end{table}

Approximation~\ref{appr:coarse} (approximation uses reduced model directly) only produces $\bar{\beta}=12 \%$. (This agrees closely with the equivalent method in~\cite{Efendiev}.) Approximation~\ref{appr:aempr} with AEM built over the prior, increases the acceptance rate in step~\ref{step:adamh3} of ADA to $\bar{\beta}=31 \%$.
However, both Approximation~\ref{appr:coarse} and the AEM constructed over the prior cannot produce a well mixed Markov chain, even after $2\times10^5$ iterations, so the IACT for the log-likelihood function could not be estimated, and is not reported for these cases.
Approximation~\ref{appr:aempost} with the AEM calculated adaptively over the posterior produces significantly better mixing, with an estimated $\bar{\beta}=77 \%$, and the IACT of the log-likelihood function is  $208$. 
Approximation \ref{appr:aemsd}, with AEM calculated adaptively over the posterior and with state-dependent reduced model \eqref{eq:rom_l}, improves the performance further, achieving $\bar{\beta}=93 \%$, and the IACT of the log-likelihood function is $153$. 
For Approximation \ref{appr:aempost} and \ref{appr:aemsd}, we ran both chains for $2\times10^6$ iterations to calculate further summary statistics, with the first $5\times10^5$ steps discarded as burn-in.

\begin{figure}[h!]
\centering
\includegraphics[width=0.8\textwidth]{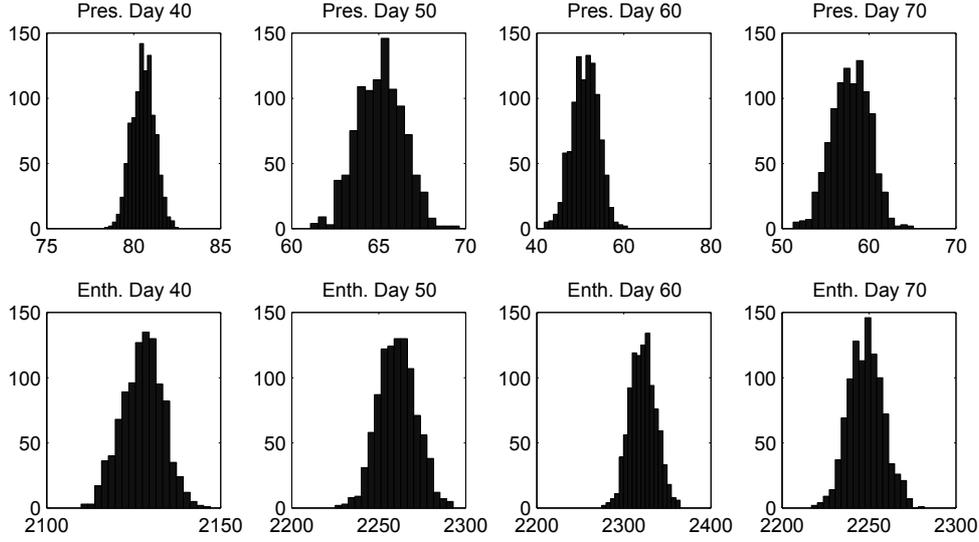}
\caption{Histograms of the predictive density, at day 40, 50, 60 and 70. Top row: Pressure, bottom row: flowing enthalpy.}
\label{figure:s_pred_hist}
\end{figure}

\begin{figure}[h!]
\centering
\includegraphics[width=0.45\textwidth]{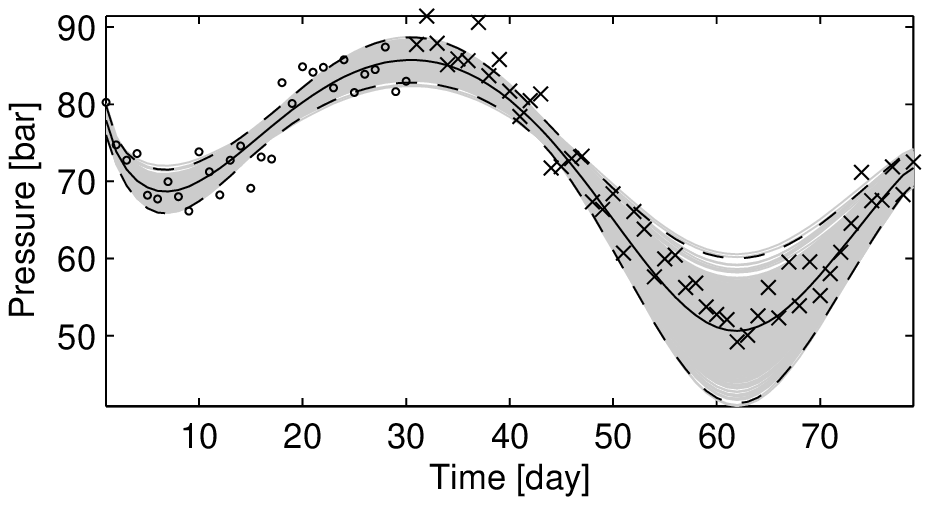}
\includegraphics[width=0.45\textwidth]{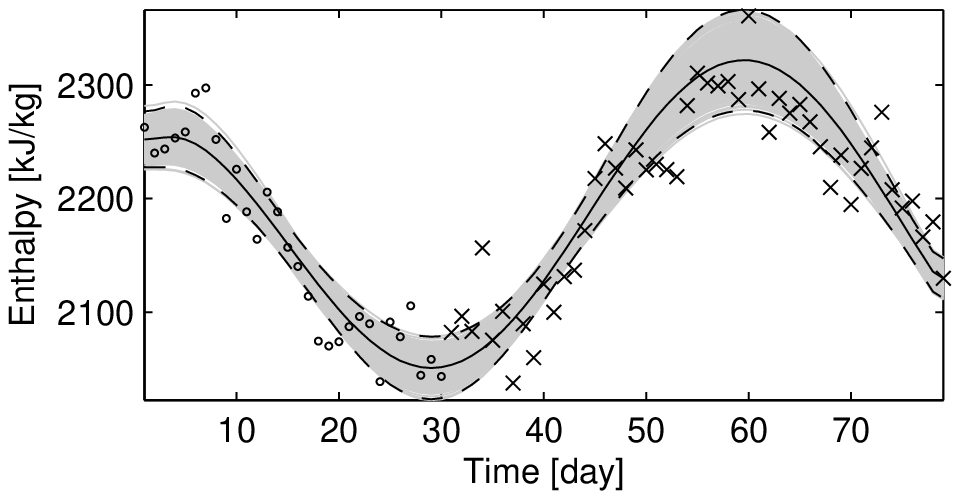}
\caption{Predictions for pressure and flowing enthalpy. Left column: pressure, Right column: flowing enthalpy. 
The circles and the crosses are the training data and validation data, respectively; the solid line and dashed lines are the mean prediction and $95 \%$ credible interval, respectively; and the shaded lines represent the predictions made by samples. 
}
\label{figure:s_pred}
\end{figure}

By using formula \eqref{eq:su_factor1}, we can estimate that the factor by which computational efficiency is improved for ADA with Approximation~\ref{appr:aempost} (posterior) and \ref{appr:aemsd} is about $4.3$ and $5.9$, respectively. 
In contrast, the use of Approximation~\ref{appr:aempr} (prior) only improves computational efficiency marginally, while the use of the na\"ive Approximation~\ref{appr:coarse} appears to actually reduce computational efficiency.
We also notice that the IACTs of the log-likelihood function suggest that the ADA with Approximation~\ref{appr:aemsd} is more statistically efficient than the standard MH, which cannot be the case as discussed in Section \ref{sec:cme}. This effect could be caused by finite-sampling error in the IACT estimate. 
However, this result suggests that the decrease of statistical efficiency may be negligible in this particular case.

The histograms of the model predictions computed on several different time points are given in Figure \ref{figure:s_pred_hist}, with pressure in the top row and enthalpy in the bottom row. We can observe that the predictions at these observation times follow uni-modal distributions. The model predictions and the $95 \%$ credible intervals over an 80-day period are shown in Figure \ref{figure:s_pred}. For both predictions, the means follow the observed data reasonably well.

\begin{figure}[ht]
\centering
\includegraphics[width=0.8\textwidth]{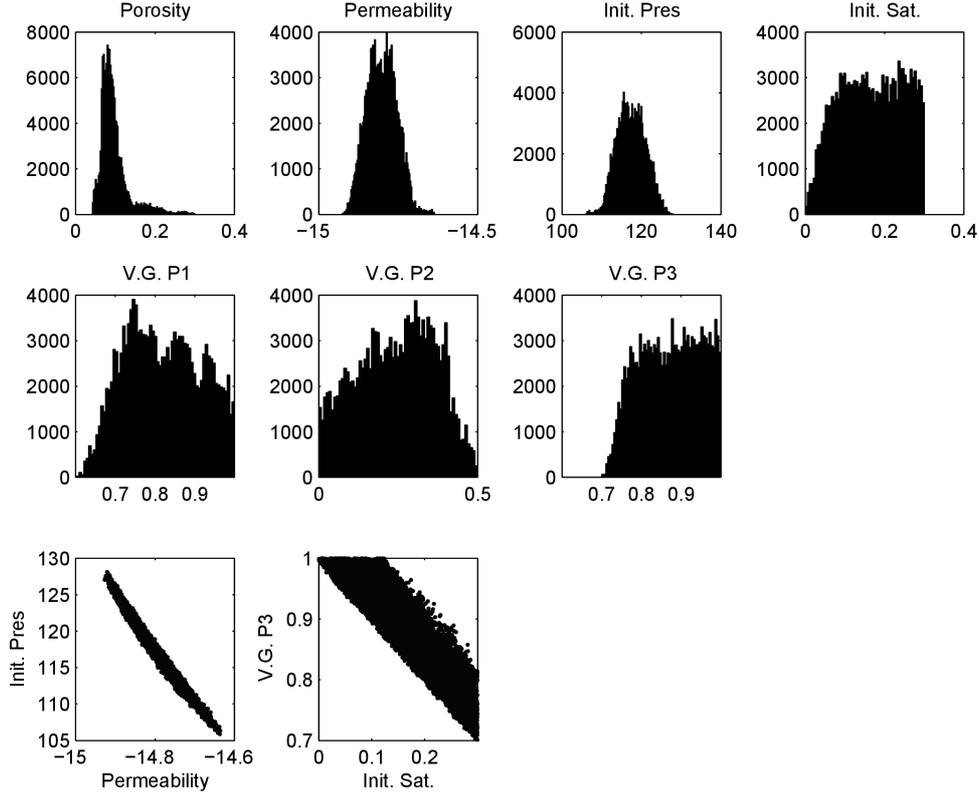}
\caption{Histograms of the marginal distributions and scatter plots between parameters, for the synthetic data set.}
\label{figure:s_hist}
\end{figure}

The histograms of the marginal distributions (first two rows of Figure \ref{figure:s_hist}) of the parameter $\bf x$ show skewness in porosity and two of the parameters of the van Genuchten-Mualem relative permeability model ($m$ and $S_\textrm{rl}$). The scatter plots between parameters show strong negative correlations between the permeability (on base $10$ logarithmic scale) and the initial pressure; see the left plot of last row of Figure \ref{figure:s_hist}. There is also a strong negative correlation between the initial saturation and one of the hyperparameters of the van Genuchten-Mualem ($S_\textrm{ls}$); see the right plot of last row of Figure \ref{figure:s_hist}.

\subsection{Natural state modelling}
\label{sec:3d}

\subsubsection{Parameter and data}

We now present an application of ADA to a 3D  geothermal reservoir model with measured field data. 
We aim to infer the  permeability structure and mass input at the bottom of the reservoir from temperature data measured from wells.
We also wish to predict the size and shape of the hot plume of the reservoir, which is only sparsely measured in wells.
The forward model covers a volume of $12.0$ km by $14.4$ km extending down to $3050$ meters below sea level. 
Relatively large blocks were used near the outside of the model and then were progressively refined near the wells to achieve a well-by-well allocation to the blocks. 
The 3D structure of the forward model $F(\cdot)$ has $26,005$ blocks, and is shown in Figure~\ref{figure:mokai} (a), where the blue lines in the middle of the grid show wells drilled into the reservoir. 
To speed up the computation a reduced model $F^*(\cdot)$ based on a coarse grid with $3,335$ blocks is constructed by combining adjacent blocks in the $x$, $y$ and $z$ directions of the forward model; see Figure \ref{figure:mokai} (b). 
Each simulation of the forward model takes about $30$ to $50$ minutes CPU time on a DELL T3400 workstation, and the computing time for the reduced model is about $1$ to $1.5$ minutes (roughly $3\%$ of the forward model). The computing time for these models is sensitive to the input parameters.

\begin{figure}[ht]
\centering
\includegraphics[width=0.8\textwidth]{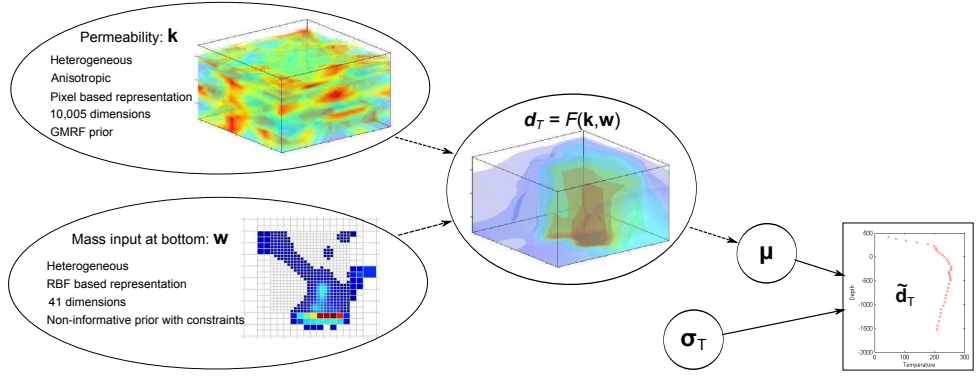}
\caption{3D modelling}
\label{figure:ns_3d}
\end{figure}

\begin{figure}[ht]
\centering
\includegraphics[width=0.8\textwidth]{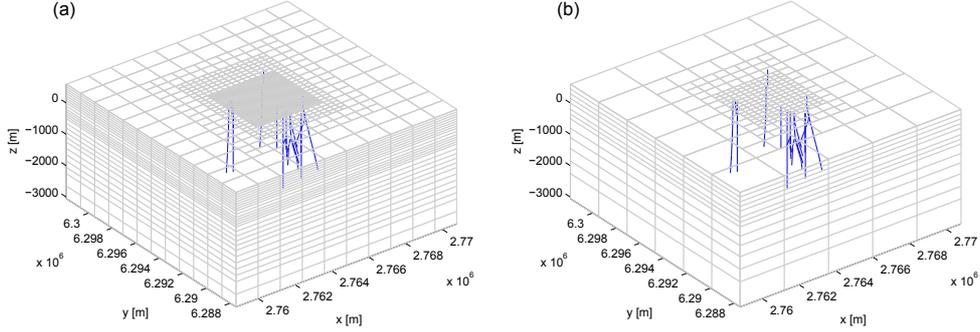}
\caption{The fine grid (left) and the coarse grid (right) used for natural state modelling.}
\label{figure:mokai}
\end{figure}

Let $s$ denote the spatial coordinate and $\Omega \subset \mathbb{R}^3$ denote the domain of the reservoir.
In the natural state modelling, we are interested in estimating spatially distributed and heterogeneous permeabilities $k(s)$ for $s \in \Omega$ and the mass input $q_{\rm m}(r)$ from the bottom boundary $\partial \Omega_{b} \subset \mathbb{R}^2$, where $r \in \partial \Omega_{b}$. 

We model the permeability tensor by scalar-valued permeability functions in the form of
\[
k(s) = \left[ \begin{array}{lll} k_{1}(s) && \\ & k_{2}(s) & \\ & & k_{3}(s) \end{array}\right],
\]
and discretize permeability functions $k_{1}(s)$, $k_{2}(s) $, and $k_{3}(s)$ on the coarse grid using piecewise linear representation. 
Given the centres of the finite volume cells in the coarse model, $s_1, s_2, \ldots, s_{3335}$, we have the discretized permeabilities
\[
{\bf k} = \big[k_{1}(s_1), k_{1}(s_2), \ldots, k_{1}(s_{3335}), k_{2}(s_1), k_{2}(s_2), \ldots, k_{2}(s_{3335}),  k_{3}(s_1), k_{3}(s_2), \ldots, k_{3}(s_{3335})\big]^\top,
\]
which is about $10,005$ dimensional. 
The spatial correlation of permeabilities in the base 10 logarithmic scale are modelled by a Gaussian Markov random field prior \cite{gmrf_rue}. We further impose that the permeability are bounded between $10^{-2}$ and $10^3$ millidarcy. 
This leads to the prior distribution of ${\bf k} $ in the form of
\[
\pi_{\rm prior}({\bf k}) \propto \exp\left( - \frac12 \log_{10}({\bf k})^\top \, {\bf Q} \, \log_{10}({\bf k}) \right)\,\chi({\bf k}),
\]
where $\bf Q$ is a symmetric positive definite sparse matrix constructed from the Gaussian Markov random field.
The mass input from the bottom boundary is modeled by the linear combination of radial basis functions with the squared-exponential kernel function
\[
q_{\rm m}(r) = \sum_{i=1}^{41} w_i \exp\left[ - \lambda \, \| r - r_i\|^2 \right],
\] 
centred at pre-specified control points ${\bf r}_i, i = 1,\ldots,41$. 
This way, the unknowns in this parametrization are the $41$ dimensional weighting variable 
\[
{\bf w} = (w_1,\ldots,w_{41})^\top,
\]
associated with the control points. 
Since this weighting vector $\bf w$ controls the distribution of the mass input, the constraints ${\bf w} > 0$ and $\sum {\bf w} = constant$ are imposed to ensure that the mass input is positive and has a fixed total amount.
Overall, we have unknown parameters $\bf x = [k, w]^\top$ with prior
\begin{eqnarray}
\label{eq:mokaipr}
\pi_{\rm prior}({\bf x})  & = \pi_{\rm prior}({\bf k}) \, \pi_{\rm prior}({\bf w}) \propto & \exp\left( - \frac12 \log_{10}({\bf k})^\top \, {\bf Q} \, \log_{10}({\bf k}) \right) \,\chi({\bf k})\,\chi({\bf w}),  \\
& \rm subject \ to & \rm \sum {\bf w} = constant \ and \ {\bf w} > 0, \nonumber
\end{eqnarray}
where $\chi({\bf k})$ and $\chi({\bf w})$ impose bounds on parameters. 
We refer interested readers to \cite{wrr_cfo_2011} for a further details of the prior modelling of this problem.

Steady state temperature measured at discrete locations along the well bores (as shown by blues in Figure \ref{figure:mokai}) are used for estimating permeability and mass input at the depth. 
Given measurement locations $s_1, s_2, \ldots, s_{N}$, we have the observed data
\[
\tilde{\bf d} = \big[\tilde{T}(s_1),\tilde{T}(s_2),\ldots,\tilde{T}(s_N) \big]^\top,
\]
as shown by the crosses in Figure  \ref{figure:temp}. 
Empirical estimation of the noise vector \cite{wrr_cfo_2011} suggests an i.i.d. Gaussian distribution with standard deviation $\sigma_\textrm{T} = 7.5^{\circ}$C to be used in the likelihood function. 
Simulating the forward model $F$ with realizations of the parameter $\bf x$ produces observable model outputs that are temperatures at measurement locations. 
This yields the posterior distribution
\begin{eqnarray}
\pi_{\rm post}({\bf x}|\tilde{\bf d}) & \propto & \exp\left[-\frac{1}{2{\sigma_\textrm{T}}^2}  \left\|F({\bf x}) - \tilde{\bf d} \right\|_2^2 \right] \, \pi_{\rm prior}({\bf x}) 
\label{eq:3d_post}
\end{eqnarray} 
where $\pi_{\rm prior}({\bf x})$ is defined in Eqn. \ref{eq:mokaipr}. Similar to the well test case, compactness assumptions and continuity assumptions (of the posterior and its approximations) in Theorems \ref{theo2} and \ref{theo3} are satisfied. 

\subsubsection{MCMC Sampling}

In the first step of ADA, the new permeability state is proposed using GCAM, and the weights controlling the mass input are proposed separately by an adaptive reversible jump move to meet the specific constraints. These proposals produce a first step acceptance rate of $\bar{\alpha}=0.1$ in ADA; see \cite{wrr_cfo_2011} for details.
Since the forward model is computationally very demanding, we first simulate the chain with the reduced model only, for about 200 sweeps of updates to get through the initial burn-in period. Then, we start ADA with Approximation~\ref{appr:aemsd} to sample the exact posterior distribution. 
 
We are able to sample the posterior distribution for about $11,200$ iterations in $40$ days, and ADA achieves about $\bar{\alpha}=74\%$ acceptance rate in the second accept-reject step. 
After discarding the first  $2,800$ iterations as burn-in steps, the estimated IACT of the log-likelihood function is about $5.6$. This is only a rough estimate because the chain has not been running long enough, however all the samples are consistent with measured data, as shown by the model outputs of the realizations (Figure \ref{figure:temp}).
Using the computational cost of the fine model and the IACT estimated from MCMC simulations over the posterior distribution defined using the coarse model, we can estimate that the adaptive delayed acceptance achieves a speed-up factor of $7.7$.

The mean and standard deviation of the temperature profiles are estimated as posterior sample averages. We compare these estimates with the measured data in Figure \ref{figure:temp}. 
The solid black lines are the estimated mean temperatures, dashed black lines denote the $95\%$ percent credible interval, and measured data are shown as red crosses. The green and gray lines represent outputs of the forward model and various reduced model realizations, respectively.
Figure \ref{figure:temp} shows that the forward model and the reduced model produce significantly different temperature profiles, and the forward model is hotter than the reduced model in average. 
This suggests that the model outputs are defined on some low dimensional manifold of the data space, and the forward model and reduced model produce outputs that are located on substantially different manifolds. 
The mean model reduction error at each of the measurement positions span a range of $[-33.64, 54.98]$, and hence the noise level of these model reduction errors are more significant than the zero mean normal distribution with standard deviation $\sigma_e = 7.5\,^{\circ}\mathrm{C}$. 
Therefore, the stochastic modeling of the model reduction error in Section \ref{sec:approx} is essential for efficient and  accurate inference when using the reduced model.

\begin{figure}[ht]
\centering
\includegraphics[width=0.8\textwidth]{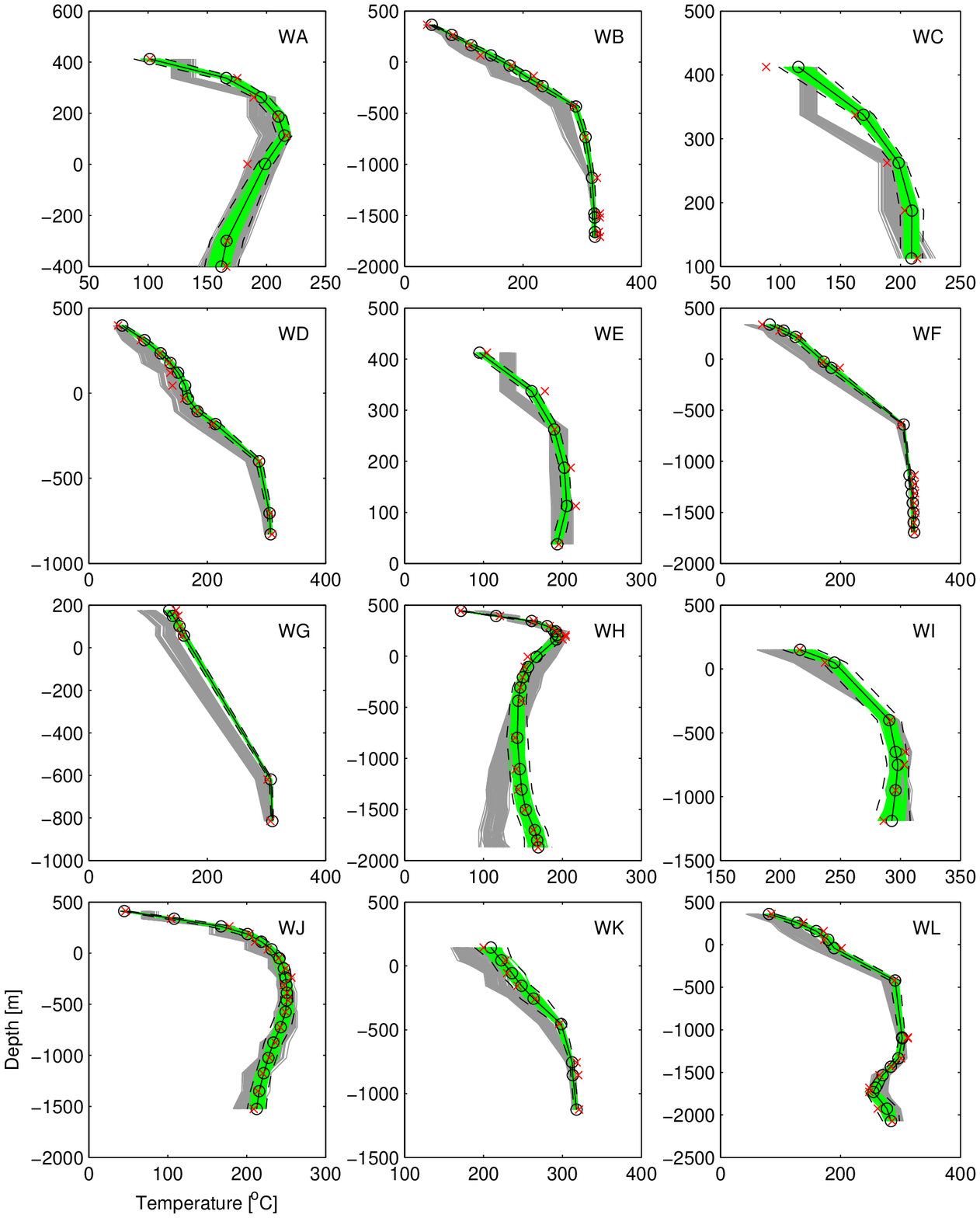}
\caption{Comparison of estimated temperatures and measured data. The solid black lines are the estimated mean temperatures, dashed black lines are the $95\%$ percent credible interval, and measured data are shown as red crosses. The green and gray lines represent outputs of the forward model and the reduced model various realizations, respectively.}
\label{figure:temp}
\end{figure}

\begin{figure}[ht]
\centering
\includegraphics[width=0.9\textwidth,height=0.4\textheight]{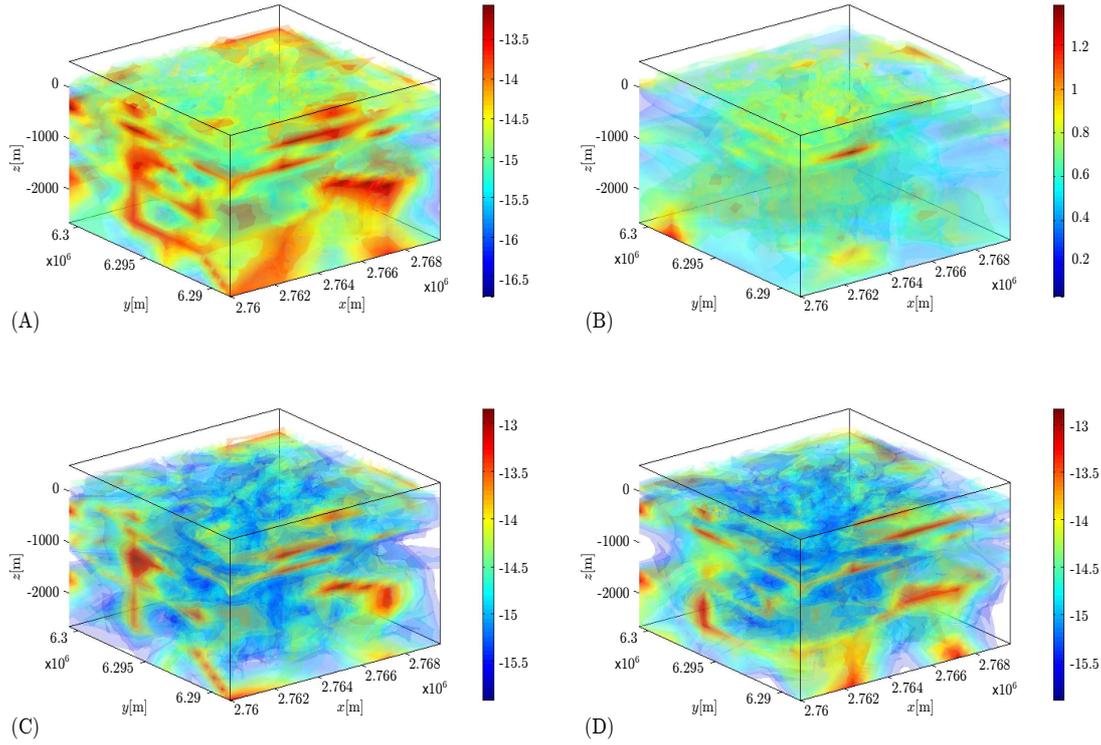}
\caption{Permeability distribution in the vertical direction on base 10 logarithmic scale. (a): the sample mean, (b): the sample standard deviation, (c): one realization from the Markov chain, and (d): another realization from the Markov chain.}
\label{figure:gk}
\end{figure}

\begin{figure}[ht]
\centering
\includegraphics[width=\textwidth,height=0.4\textheight]{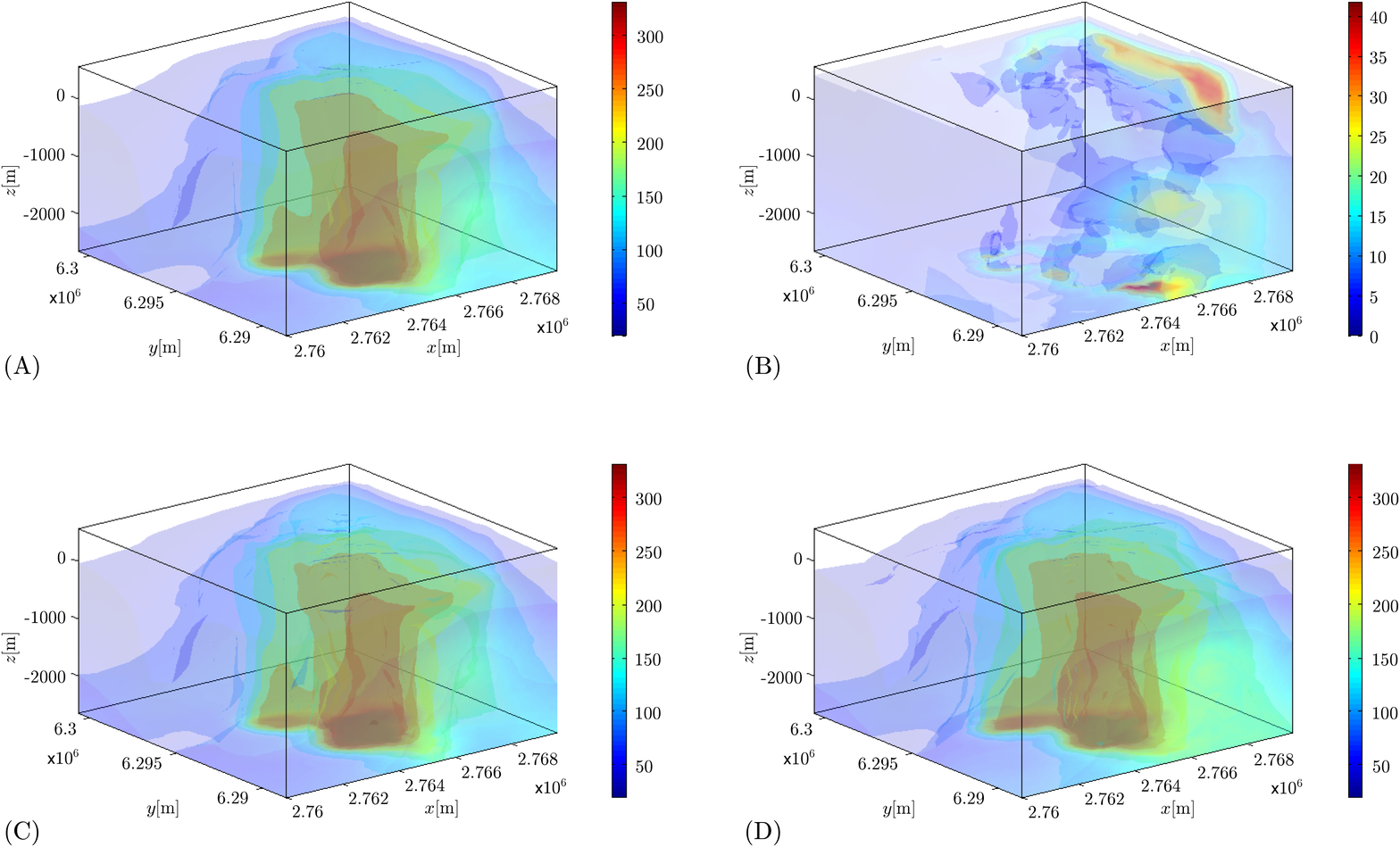}
\caption{Distributions of model temperatures, unit is $^{\circ}C$. (a): the sample mean, (b): the sample  standard deviations, (c): one realization from the Markov chain, and (d): another realization from the Markov chain.}
\label{figure:temp_d}
\end{figure}

The samples of the permeability distributions (we only report the permeabilities in the vertical direction for brevity, see Figure \ref{figure:gk}) are highly variable, even though each of these samples produces reasonable match to the measured data. 
In comparison the predicted temperature distribution (Figure \ref{figure:temp_d}) has very small uncertainty intervals. It reveals that even though the parameters are not well identified, the predicted temperature field is relatively well determined.

\section{Discussion}
\label{sec:dis}
We considered sample-based uncertainty quantification for inverse problems within the Bayesian formulation. Our primary contribution has been to advance computational methods that utilize a cheap approximation to the forward map, or reduced model, to improve computational efficiency of sample-based inference, that is the computational rate of reducing Monte Carlo errors in sample-based estimates. 
We presented state-dependent and stochastic corrections to reduced models, and gave a novel algorithm for adapting to the AEM, along with regularity conditions that guarantee ergodicity for the correct posterior distribution. We validated this algorithm and the improved approximations in two examples of parameter estimation for multi-phase flow in geothermal reservoirs, including a large-scale natural-state system using field data.

A straightforward method is to use a reduced model, such as a coarse PDE solve, in place of the correct forward map, and employ the DA algorithm to ensure the correct target distribution. We presented this as Approximation~\ref{appr:coarse} and computed results using this approximation. In our test cases, 
we found that the acceptance rate, $\bar{\beta}$ in the second accept-reject step in DA (or ADA) was never more than $0.2$. This indicates that the reduction in statistical efficiency, that necessarily occurs when using an approximation, may effectively cancel the reduction in compute time per iteration, resulting in negligible improvement in computational efficiency for the multi-phase flow problems we considered.

We improved the approximation to the target distribution, at no significant increase in computational cost, by introducing a state-dependent correction to the coarse PDE solve, and also a stochastic model for the model reduction error. The former was made possible by using the DA algorithm~\cite{ChristenFox2005}, while posterior estimation of the AEM was made possible by using adaptive MCMC techniques~\cite{adaptive_roberts2007}. We validated the sequence of approximations in a stylized seven-dimensional inverse problem in multi-phase flow, that allowed extensive evaluation of diagnostics.

Of particular importance was adaptive construction of the AEM over the posterior distribution; construction of the AEM over the prior, as in the original formulation of~\cite{stats_inverse}, improves statistical efficiency noticeably by increasing $\bar{\beta}$ from $0.12$ to $0.31$, but the construction over the posterior made a much greater improvement  to $\bar{\beta}=0.77$, while also not needing the expensive pre-calculation required for prior construction of the AEM. Including both the state-dependent correction and posterior AEM further gave further improvement with to $\bar{\beta}=0.93$, indicating that statistical efficiency is only slightly reduced in comparison to standard MH, so reduction in computational cost per iteration is transferred directly to improvement in computational efficiency.

Finally, we reported results from a large-scale inverse problem in calibration and prediction using a comprehensive numerical model of an actual geothermal field, using measured field data. For that example, the novel ADA algorithm with Approximation~\ref{appr:aemsd} gave a increase in computational efficiency by a factor of about $7.7$. Given the very large-scale nature of the computation required in this example, the computational savings by using the novel ADA algorithm are significant. We expect that the computational approach we have developed here will also be of significant value for uncertainty quantification in other large-scale inverse problems in science and engineering.

The approximations used in this paper are based on grid coarsening, due to the blackbox nature of our forward model. 
The power of our adaptive delayed acceptance framework is not limited by this form of reduced models. 
The regularity conditions on the approximate posteriors established in Theorems \ref{theo2} and \ref{theo3} offer further insights in designing new goal-oriented reduced models for solving inverse problems. 
For example, methods such as projection-based model reduction\cite{mor_BGW_2015, MOR_DEIM_reservoir, mor_LWG_2015, mor_CMW_2015, mor_APL_2016}, which often replies on expensive offline training phase by repeatedly evaluating froward models over prior samples, can also be adaptively constructed using the ADA framework.
This can potentially lead to significant speed-up and accuracy improvement for quantifying uncertainties of inverse problems.

\section*{Acknowledgement}
T. Cui acknowledges financial support from the Australian Research Council, under grant number LP170100985.

\bibliographystyle{acm}
\bibliography{CGref}

\begin{thebibliography}{10}

\bibitem{mcmc_resistivity}
{\sc Andersen, K.~E., Brooks, S.~P., and Hansen, M.~B.}
\newblock Bayesian inversion of geoelectrical resistivity data.
\newblock {\em Journal of the Royal Statistical Society: Series B 65\/} (2004),
  619--644.

\bibitem{mor_APL_2016}
{\sc Andrea, M., Pagani, S., and Lassila, T.}
\newblock Accurate solution of {B}ayesian inverse uncertainty quantification
  problems combining reduced basis methods and reduction error models.
\newblock {\em SIAM/ASA Journal on Uncertainty Quantification 4(1)\/} (2016),
  380--412.

\bibitem{adaptive_andr2006}
{\sc Andrieu, C., and Moulines, E.}
\newblock On the ergodicity properties of some adaptive {M}arkov chain {M}onte
  {C}arlo algorithms.
\newblock {\em Annals of Applied Probability 16(1)\/} (2006), 462--505.

\bibitem{adaptive_atch2005}
{\sc Atchade, Y.~F., and Rosenthal, J.~S.}
\newblock On adaptive {M}arkov chain {M}onte {C}arlo algorithms.
\newblock {\em Bernoulli 11\/} (2005), 815--828.

\bibitem{hierarchical_bayes}
{\sc Banerjee, S., Gelfand, A.~E., and Carlin, B.~P.}
\newblock {\em Hierarchical Modeling and Analysis for Spatial Data}.
\newblock Chapman and Hall/CRC, 2003.

\bibitem{mor_BGW_2015}
{\sc Benner, P., Gugercin, S., and Willcox, K.}
\newblock A survey of projection-based model reduction methods for parametric
  dynamical systems.
\newblock {\em SIAM Review 57(4)\/} (2015), 483--531.

\bibitem{ChristenFox2005}
{\sc Christen, J.~A., and Fox, C.}
\newblock {M}arkov chain {M}onte {C}arlo using an approximation.
\newblock {\em Journal of Computational and Graphical Statistics 14(4)\/}
  (2005), 795--810.

\bibitem{upscaling_christie_2001}
{\sc Christie, M.~A., and Blunt, M.~J.}
\newblock Tenth spe comparative solution project: A comparison of upscaling
  techniques.
\newblock {\em SPE Reservoir Engineering and Evaluation 4\/} (2001), 308--317.

\bibitem{cornford_inversion2004}
{\sc Cornford, D., Csat\'o, L., Evans, D.~J., and Opper, M.}
\newblock Bayesian analysis of the scatterometer wind retrieval inverse
  problem: Some new approaches.
\newblock {\em Journal of the Royal Statistical Society. Series B 66(3)\/}
  (2004), 609--652.

\bibitem{phd_cui}
{\sc Cui, T.}
\newblock {\em Bayesian Calibration of Geothermal Reservoir Models via {M}arkov
  Chain {M}onte {C}arlo}.
\newblock PhD thesis, The University of Auckland, 2010.

\bibitem{wrr_cfo_2011}
{\sc Cui, T., Fox, C., and O'Sullivan, M.~J.}
\newblock Bayesian calibration of a large scale geothermal reservoir model by a
  new adaptive delayed acceptance {M}etropolis {H}astings algorithm.
\newblock {\em Water Resource Research 47\/} (2011).
\newblock 26 pp.

\bibitem{MCMC:CLM_2016}
{\sc Cui, T., Law, K.~J.~H., and Marzouk, Y.~M.}
\newblock Dimension-independent likelihood-informed mcmc.
\newblock {\em Journal of Computational Physics 304\/} (2016), 109--137.

\bibitem{mor_CMW_2015}
{\sc Cui, T., Marzouk, Y., and Willcox, K.}
\newblock Data-driven model reduction for the {B}ayesian solution of inverse
  problems.
\newblock {\em International Journal for Numerical Methods in Engineering
  102(5)\/} (2015), 966--990.

\bibitem{Efendiev}
{\sc Efendiev, Y., Hou, T., and Luo, W.}
\newblock Preconditioning {M}arkov chain {M}onte {C}arlo simulations using
  coarse-scale models.
\newblock {\em SIAM Journal on Scientific Computing 28}, 2 (2006), 776--803.

\bibitem{mcmc_geyer1992}
{\sc Geyer, C.~J.}
\newblock Practical {M}arkov chain {M}onte {C}arlo.
\newblock {\em Statistical Science 7}, 4 (1992).

\bibitem{MOR_DEIM_reservoir}
{\sc Ghasemi, M., Yang, Y.~Gildin, E., Efendiev, Y.~R., and Calo, V.~M.}
\newblock Fast multiscale reservoir simulations using pod-deim model reduction.
\newblock In {\em SPE reservoir simulation symposium\/} (2015), Society of
  Petroleum Engineers.

\bibitem{geo_book}
{\sc Grant, M.~A., Donaldson, I.~G., and Bixley, P.~F.}
\newblock {\em Geothermal Reservoir Engineering}.
\newblock Academic Press, 1982.

\bibitem{haario_inversion2004}
{\sc Haario, H., Laine, M., Lehtinen, M., Saksman, E., and Tamminen, J.}
\newblock Markov chain {M}onte {C}arlo methods for high dimensional inversion
  in remote sensing.
\newblock {\em Journal of the Royal Statistical Society. Series B 66(3)\/}
  (2004), 591--607.

\bibitem{adaptive_haario2001}
{\sc Haario, H., Saksman, E., and Tamminen, J.}
\newblock An adaptive {M}etropolis algorithm.
\newblock {\em Bernoulli 7\/} (2001), 223--242.

\bibitem{hadamard}
{\sc Hadamard, J.}
\newblock Sur les probl\`{e}mes aux d\'{e}riv\'{e}es partielles et leur
  signification physique. [on the problems about partial derivatives and their
  physical significance].
\newblock {\em Princeton University Bulletin 13\/} (1902), 49--52.

\bibitem{mcmc_hastings}
{\sc Hastings, W.}
\newblock Monte {C}arlo sampling using {M}arkov chains and their applications.
\newblock {\em Biometrika 57\/} (1970), 97--109.

\bibitem{model_higdon2002}
{\sc Higdon, D., Lee, H., and B., Z.}
\newblock A {B}ayesian approach to characterizing uncertainty in inverse
  problems using coarse and fine-scale information.
\newblock {\em IEEE Transactions on Signal Processing 50(2)\/} (2002),
  389--399.

\bibitem{model_higdon2003}
{\sc Higdon, D., Lee, H., and Holloman, C.}
\newblock {M}arkov chain {M}onte {C}arlo-based approaches for inference in
  computationally intensive inverse problems.
\newblock In {\em Bayesian Statistics 7\/} (2003), J.~M. Bernardo, M.~J.
  Bayarri, J.~O. Berger, A.~P. Dawid, D.~Heckerman, A.~F.~M. Smith, and
  M.~West, Eds., Oxford University Press, pp.~181--197.

\bibitem{rue_2003}
{\sc Hurn, M.~A., Husby, O., and Rue, H.}
\newblock Advances in {B}ayesian image analysis.
\newblock In {\em Highly Structured Stochastic Systems}, P.~J. Green, N.~L.
  Hjort, and S.~Richardson, Eds. Oxford University Press, 2003, pp.~301--322.

\bibitem{stats_inverse}
{\sc Kaipio, J., and Somersalo, E.}
\newblock {\em Statistical and Computational Inverse Problems}.
\newblock Springer-Verlag, 2004.

\bibitem{model_kaipio1}
{\sc Kaipio, J.~P., and Somersalo, E.}
\newblock Statistical inverse problems: Discretization, model reduction and
  inverse crimes.
\newblock {\em Journal of Computational and Applied Mathematics 198(2)\/}
  (2007), 493--504.

\bibitem{model_ohagan}
{\sc Kennedy, M.~C., and O'Hagan, A.}
\newblock Bayesian calibration of computer models (with discussion).
\newblock {\em Journal of the Royal Statistical Society: Series B 63\/} (2001),
  425--464.

\bibitem{clt_kipnis1986}
{\sc Kipnis, C., and Varadhan, S. R.~S.}
\newblock Central limit theorem for additive functionals of reversible {M}arkov
  processes and applications to simple exclusions.
\newblock {\em Communications in Mathematical Physics 104(1)\/} (1986), 1--19.

\bibitem{LRR2013}
{\sc \L{}atuszy\'nski, K., Gareth O.~Roberts, G.~O., and Rosenthal, J.~S.}
\newblock Adaptive gibbs samplers and related mcmc methods.
\newblock {\em The Annals of Applied Probability 23}, 1 (2013), 66--98.

\bibitem{mor_LWG_2015}
{\sc Lieberman, C., Willcox, K., and Ghattas, O.}
\newblock Parameter and state model reduction for large-scale statistical
  inverse problems.
\newblock {\em SIAM Journal on Scientific Computing 32(5)\/} (2015),
  2523--2542.

\bibitem{mcmc_liu}
{\sc Liu, J.~S.}
\newblock {\em {M}onte {C}arlo Strategies in Scientific Computing}.
\newblock Springer-Verlag, 2001.

\bibitem{MCMC:MWBG_2012}
{\sc Martin, J., Wilcox, L.~C., Burstedde, C., and Ghattas, O.}
\newblock A stochastic {N}ewton {MCMC} method for large-scale statistical
  inverse problems with application to seismic inversion.
\newblock {\em SIAM Journal on Scientific Computing 34}, 3 (2012),
  A1460--A1487.

\bibitem{metropolis}
{\sc Metropolis, N., Rosenbluth, A.~W., Rosenbluth, M.~N., Teller, A.~H., and
  Teller, E.}
\newblock Equation of state calculations by fast computing machines.
\newblock {\em Journal of chemical physics 21\/} (1953), 1087--1092.

\bibitem{hydromcmc_mondal_2010}
{\sc Mondal, A., Efendiev, Y., Mallick, B., and Datta-Gupta, A.}
\newblock Bayesian uncertainty quantification for flows in heterogeneous porous
  media using reversible jump {M}arkov chain {M}onte {C}arlo methods.
\newblock {\em Advances in Water Resources 33(3)\/} (2010), 241--256.

\bibitem{hydromcmc_oliver1997}
{\sc Oliver, D.~S., Cunha, L.~B., and Reynolds, A.~C.}
\newblock {M}arkov chain {M}onte {C}arlo methods for conditioning a
  permeability field to pressure data.
\newblock {\em Mathematical Geology 29(1)\/} (1997), 61--91.

\bibitem{geo_osullivan_1985}
{\sc O'Sullivan, M.~J.}
\newblock Geothermal reservoir simulation.
\newblock {\em International Journal of Energy Research 9(3)\/} (1985),
  319--332.

\bibitem{tough2}
{\sc Pruess, K.}
\newblock {\em {TOUGH2} - A General-Purpose Numerical Simulator for Multiphase
  Fluid and Heat Flow}.
\newblock Lawrence Berkeley National Laboratory, Berkeley, California, 1991.

\bibitem{Quiroz2018}
{\sc Quiroz, M., Tran, M.-N., Villani, M., and Kohn, R.}
\newblock Speeding up {MCMC} by delayed acceptance and data subsampling.
\newblock {\em Journal of Computational and Graphical Statistics 27}, 1 (2018),
  12--22.

\bibitem{mcmc_robert}
{\sc Robert, C.~P., and Casella, G.}
\newblock {\em {M}onte {C}arlo Statistical Methods}.
\newblock Springer-Verlag, 1999.

\bibitem{mcmc_roberts2001}
{\sc Roberts, G.~O., and Rosenthal, J.~S.}
\newblock Optimal scaling for various {M}etropolis-{H}astings algorithms.
\newblock {\em Statistical Science 16\/} (2001), 351--367.

\bibitem{adaptive_roberts2007}
{\sc Roberts, G.~O., and Rosenthal, J.~S.}
\newblock Coupling and ergodicity of adaptive {M}arkov chain {M}onte {C}arlo
  algorithms.
\newblock {\em Journal of Applied Probability 44\/} (2007), 458--475.

\bibitem{gmrf_rue}
{\sc Rue, H., and Held, L.}
\newblock {\em Gaussian {M}arkov Random Fields: Theory and Applications}.
\newblock Chapman \& Hall, 2005.

\bibitem{Sokal}
{\sc Sokal, A.}
\newblock Monte {C}arlo methods in statistical mechanics: foundations and new
  algorithms, 1989.
\newblock In Course de Troisi\`{e}me Cycle de la Physique en Suisse Romande.

\bibitem{vangenuchten}
{\sc van Genuchten, M.~T.}
\newblock A closed-form equation for predicting the hydraulic conductivity of
  unsaturated soils.
\newblock {\em Soil Science Society of America Journal 44\/} (1980), 892--898.

\bibitem{model_watzenig2009}
{\sc Watzenig, D., and Fox, C.}
\newblock A review of statistical modelling and inference for electrical
  capacitance tomography.
\newblock {\em Measurement Science and Technology 20(5)\/} (2009).

\end{thebibliography}

\end{document}